\renewcommand\vec{\mathbf}
\newcommand*{\p}{\vec{p}}
\newcommand*{\q}{\vec{q}}
\newcommand*{\e}{\vec{e}}
\newcommand*{\s}{\vec{s}}
\DeclareMathOperator{\locc}{LOCC}
\DeclareMathOperator{\pure}{PURE}
\DeclareMathOperator{\prob}{Prob}
\DeclareMathOperator{\SR}{SR}
\DeclareMathOperator{\diag}{diag}
\DeclarePairedDelimiter\floor{\lfloor}{\rfloor}
\DeclarePairedDelimiter\ceil{\lceil}{\rceil}
\newcommand*{\N}[1]{\left\lVert#1\right\rVert}
\newcommand*{\A}[1]{\left\lvert#1\right\rvert}
\newcommand*{\KBra}[2]{|#1\rangle\!\langle#2|}
\newcommand*{\mb}{\mathfrak{B}}
\newcommand*{\md}{\mathfrak{D}}
\newcommand*{\mM}{\mathcal{M}}
\newcommand*{\mN}{\mathcal{N}}
\newcommand*{\mbN}{\mathbb{N}}
\newcommand*{\mbR}{\mathbb{R}}
\newcounter{theorems}
\newtheorem{theorem}[theorems]{Theorem}
\newtheorem{definition}[theorems]{Definition}
\newtheorem{lemma}[theorems]{Lemma}
\newtheorem{proposition}[theorems]{Proposition}
\newtheorem{corollary}[theorems]{Corollary}
\providecommand*{\diff}%
{\@ifnextchar^{\DIfF}{\DIfF^{}}} \def\DIfF^#1{%
	\mathop{\mathrm{\mathstrut d}}%
	\nolimits^{#1}\gobblespace}
\def\gobblespace{%
	\futurelet\diffarg\opspace}
\def\opspace{%
	\let\DiffSpace\!%
	\ifx\diffarg(%
	\let\DiffSpace\relax
	\else
	\ifx\diffarg[%
	\let\DiffSpace\relax
	\else
	\ifx\diffarg\{%
	\let\DiffSpace\relax
	\fi\fi\fi\DiffSpace}
\providecommand*{\deriv}[3][]{%
	\frac{\diff^{#1}#2}{\diff #3^{#1}}}
\begin{document}

\title{Complete Characterization of Entanglement Embezzlement}
\author{Elia Zanoni}
\email{elia.zanoni@ucalgary.ca}
\orcid{0009-0002-7494-2968}
\author{Thomas Theurer}
\orcid{0000-0002-3888-4628}
\author{Gilad Gour}
\orcid{0000-0002-4892-4072}
\affiliation{Department of Mathematics and Statistics, University of Calgary, Calgary, AB T2N 1N4, Canada}
\affiliation{Institute for Quantum Science and Technology, University of Calgary, Calgary, AB T2N 1N4, Canada}

\begin{abstract}
	Using local operations and classical communication (LOCC), entanglement can be manipulated but not created. However, entanglement can be \emph{embezzled}.
    In this work, we completely characterize universal embezzling families and demonstrate how this singles out the original family introduced by van Dam and Hayden. To achieve this, we first give a full characterization of pure to mixed state LOCC-conversions. Then, we introduce a new conversion distance and derive a closed-form expression for it. 
    These results might be of independent interest.
\end{abstract}

\maketitle

\section{Introduction}
Quantum entanglement~\cite{BBPS96, VPRK97, Nie99, BPRST00, PV07, HHHH09} describes correlations between different particles with no classical counterpart and has both deep foundational implications~\cite{EPR35} and numerous applications in quantum information science~\cite{BB84a, BB84b, BWS92, BBCJPW93, BdVSW96, BSST99, BSST06, BDHSW14, Pop94, HHH99, BCR11}.
If two or more parties are far apart, in practice, they are restricted to local operations and classical communication (LOCC) because during transmission over long distances, physical systems will unavoidably interact with an environment and eventually lose the quantum information they carry. In contrast, it is much simpler to exchange classical information that can easily be amplified and protected. The states that can be prepared with $\locc$, i.e., the separable ones, are therefore considered free and the ones that cannot, which are exactly the entangled ones, are considered costly or resourceful.

According to this reasoning, entanglement is a resource and thus studied within the framework of quantum resource theories (QRTs) \cite{CG19, CFS16}. In a QRT, a physically motivated restriction such as the one discussed above divides both states and operations into free or resourceful in a consistent manner: Free operations map free states into free states. Once the free operations and states have been fixed, a QRT studies which quantum advantages depend on the resource under consideration and, closely related, how the consumption of resourceful states can help to overcome the restriction. An example is quantum teleportation~\cite{BBCJPW93}: By consuming entangled states, $\locc$ allows to teleport quantum systems and thereby simulate arbitrary operations outside $\locc$.

In this sense, consuming entangled states or more generally resourceful states can lead to operational advantages, e.g., in communication scenarios~\cite{BB84a, BB84b, BWS92, BBCJPW93, BdVSW96, BSST99, BSST06, BDHSW14}. Importantly, if a state can be converted with a free operation into another one, then the former is at least as valuable as the latter in any application that only allows for free operations. Answering the question which states can be converted into each other is thus a central question in any QRT. In entanglement theory, Nielsen's Theorem~\cite{Nie99} provides a characterization of the deterministic conversion between pure states that was later extended to probabilistic conversions~\cite{JP99A}. Here, we generalize these results and characterize the conversion of a pure state into a mixed state. 

The results on state conversions mentioned so far require that an initial state is exactly converted into a target state. However, every physical setup is affected by noise and finitely many measurements will only lead to finite accuracy. It is therefore practically impossible to distinguish the desired target state from a close approximation. This is the physical motivation to investigate whether a state can be \textit{approximately} converted to the target using free operations~\cite{VJN00}. There are several equivalent distances that formalize a notion of closeness between quantum states \cite{Uhl76, Joz94, FVdG99, TCR10, GT20}, which can be used to define conversion distances~\cite{WW19, SCG20, SG22, Gou22}. These conversion distances measure the error in approximate conversions, i.e., given two states, they are the smallest distance between the second state and the result of any free operation on the first one. In this article, we introduce a new conversion distance under $\locc$ defined on pure states. Using our result on exact conversions, we prove that it is topologically equivalent to the other conversion distances and derive a closed formula for it.

This in turn will allow us to completely characterize entanglement embezzlement~\cite{vDH03, LW14}: Whilst it is impossible to create entanglement with LOCC~\cite{VPRK97, PV07, CLP17, CG19}, it is possible to embezzle it in the sense that one converts a given entangled state approximately to itself \textit{and} a copy of another entangled target state. A family of pure states such that one can do this for any target state with arbitrary accuracy using $\locc$ and one of its members is called a universal embezzling family. Embezzlement is thus a generalization of quantum catalysis, a phenomenon discovered in the early years of the resource theory of entanglement~\cite{JP99O}: Whilst in catalysis, the catalyst must be preserved exactly, embezzlement allows to change it by an arbitrarily small amount. Very recently, we have witnessed a renewed interest in the catalysis of various quantum resources~\cite{LRS23, GKS23, SN23, LN23, LWW23, DGKS23, SN22, LPB23, KL22, DKMS22, LS21, KDS21, Kar21, Wil21, LJ21, LS21a, SS21}, with applications far beyond state conversion. In Ref.~\cite{LS21}, it was for example shown that a catalytic quantum teleportation protocol outperforms the standard teleportation protocol~\cite{BBCJPW93, Pop94, HHH99} in terms of teleportation fidelity. For further details and examples, see the recent review articles Refs.~\cite{LW23, DKMS22a}.

Universal embezzling families are valuable resources in various applications. They are, for example, necessary resources for the efficient simulation of noisy quantum channels with noiseless channels~\cite{BDHSW14, BCR11}, a result known in quantum information theory as `Quantum Reverse Shannon Theorem'. They are also an important component in the elementary proofs of Grothendiek theorems~\cite{RV14, Gro53}, which are of fundamental importance in the theories of Banach spaces and $C^*$- algebras (see, e.g., Ref~\cite{Pis12}). Furthermore, embezzling families are necessary to win various quantum guessing games with certainty~\cite{LTW13, DSV15, RV15, JLV20}. Lastly, due to the close relation to catalysis, we expect that they will prove useful in applications such as teleportation, where catalysis provides advantages.

The first universal embezzling family was introduced in Ref.~\cite{vDH03} and we call it van Dam and Hayden embezzling family. More recently, additional families have been proposed in Ref.~\cite{LW14}. In these works, embezzlement is considered using only local operations. Here, we extend the framework and additionally allow for classical communication, provide a complete characterization of universal embezzling families under LOCC, and discuss in what sense the van Dam and Hayden family is unique.

\section{Notation and Preliminaries}
In this article, we restrict ourselves to finite-dimensional Hilbert spaces and denote them with capital Latin letters such as $A$, $B$. The dimension of a Hilbert space $C$ is denoted by $\A{C}$, and the set of density matrices acting on it by $\md{(C)}$. For the set of pure states in $\md(C)$, we write $\pure{(C)}$.
Small Greek letters such as $\rho$ and $\sigma$ denote density matrices, with $\psi$, $\varphi$, and $\chi$ reserved for pure states. For $\rho, \sigma \in \md(C)$, the trace distance between $\rho$ and $\sigma$ is defined as $\frac{1}{2}\N{\rho - \sigma}_1$, where $\N{\cdot}_1$ is the trace norm.

Probability vectors are represented by bold small Latin letters, e.g., $\p$, with $p_x$ the $x$-th component of $\p$.   The set of probability vectors of length $d$ is denoted by $\prob(d)$, which contains the subset $\prob^\downarrow(d)$ consisting of all $d$-dimensional probability vectors with non-increasing entries. The $k$-th Ky Fan norm of $\p \in \prob^\downarrow (d)$ is defined as
\begin{equation}
	\N{\p}_{(k)} = \sum_{x=1}^k p_x
\end{equation}
for $k \in[d]$, where $[d]$ is a shorthand notation for $\{1, \dots, d\}$. We write $\p \succ \q$ if  $\p$ majorizes $\q$~\cite{MOA11, NV01, Nie02} and $\N{\p-\q}_1$ for $\sum_x |p_x-q_x|$.

Since this article is concerned with two spatially separated parties, call them Alice and Bob, it is important to make clear which system is under the control of whom: Systems belonging to Alice will always be denoted by $A$ or $A'$, and systems belonging to Bob by $B$ or $B'$.
Quantum channels are represented by calligraphic large Latin letters such as $\mM$, $\mN$, and the set of quantum channels from a bipartite system $AB$ to $A'B'$ that Alice and Bob can implement if they are restricted to local operations and classical communication is denoted by $\locc(AB \to A'B')$.
For bipartite systems $AB$, we assume w.l.o.g. that $\A{A} = \A{B}=d$ (since with $\locc$, Alice and Bob can always attach and remove local auxiliary systems). Fixing an orthonormal basis $\Set{\Ket{x}_A}_{x \in [d]}$ for $A$ and  $\Set{\Ket{x}_B}_{x \in [d]}$ for $B$, under $\locc$, every $\psi \in \pure(AB)$ is then equivalent to its standard form $\sum_x \sqrt{p_x} \Ket{xx}_{AB}$, where $\p \in \prob^\downarrow(d)$ are the Schmidt coefficients of $\psi$ (which, w.l.o.g., we will always assume to be ordered non-increasingly from here on).     We denote with $\SR(\psi)$ the Schmidt rank of a pure bipartite state $\psi\in \pure(AB)$, i.e., the number of non-zero Schmidt coefficients of $\psi$, and use the symbol $\Phi_m = \sum_{x = 1}^m \frac{1}{\sqrt{m}}\Ket{xx}_{AB}$ for the maximally entangled state on $AB$, where $\A{A} = \A{B} = m$. Finally, for $\rho \in \md(AB)$ and $\sigma \in \md(A'B')$,  we write $\rho \xrightarrow\locc \sigma$ whenever there exists an $\mN \in \locc(AB \to A'B')$ such that $\sigma = \mN(\rho)$ and $\rho \xrightarrow{\locc} \Set{t_z, \tau_z}$ whenever there exists a probabilistic $\locc$ protocol that converts $\rho$ to $\tau_z$ with probability $t_z$.

\section{State conversions with LOCC}
As motivated in the introduction,  the question of how entangled states can be interconverted is at the core of the resource theory of entanglement. The exact deterministic $\locc$-conversion problem between pure states is solved by Nielsen's Theorem~\cite{Nie99}: For $\psi$, $\varphi \in \pure(AB)$ and $\p$, $\q \in \prob^\downarrow{(|A|)}$ their associated Schmidt vectors, $\psi \xrightarrow{\locc} \varphi$ if and only if~\cite{Vid99}
\begin{equation}
	E_{k}(\psi)  \geq E_{k}(\varphi)   \quad  	\,\forall k \in [|A|]\;,
\end{equation}
where $E_{k}(\psi) := 1 - \N{\p}_{(k)}$.
In Ref.~\cite{JP99A}, this result was generalized to the case where the target state $\varphi$ is replaced with an ensemble of states: For $\psi$, $\varphi_1$, $\dots$, $\varphi_n \in \pure(AB)$, $\psi \xrightarrow{\locc} \Set{t_z, \varphi_z}_{z \in[n]}$ if and only if
\begin{equation}
	E_{k}(\psi)  \geq \sum_{z=1}^nt_zE_{k}(\varphi_z) \quad \forall k \in [|A|],
\end{equation}
which can be rewritten as
\begin{equation}\label{eq:pure-to-prob-mixture}
	\min_{k\in [|A|]}\Bigl\{E_{k}(\psi) - \sum_z t_z E_{k}(\varphi_z)\Bigr\} \geq 0.
\end{equation}
We extend this result considering a generic mixed state as target.
\begin{proposition} \label{prop:puretomixed}
	For $\psi \in \pure(AB)$ and $\sigma \in \md(AB)$, $\psi \xrightarrow\locc \sigma$ if and only if there exists a pure state decomposition $\Set{p_z, \chi_z}$ of $\sigma$ (i.e., $\sigma = \sum_z p_z \chi_z$) such that
	\begin{equation}
		\min_{k\in [|A|]} \Bigl\{E_{k}(\psi) - \sum_z p_z E_{k}(\chi_z)\Bigr\} \geq 0 .
	\end{equation}
\end{proposition}
That this condition is sufficient follows directly from Eq.~\eqref{eq:pure-to-prob-mixture}. That it is also necessary follows from the Lo-Popescu Theorem~\cite{LP01} which implies that if $\rho \xrightarrow{\locc} \sigma$, then there exists a pure state decomposition $\Set{p_z, \chi_z}$ of $\sigma$ that satisfies Eq.~\eqref{eq:pure-to-prob-mixture}.
The details of this proof can be found in Appendix~\ref{sec:pure-to-mix}.

For certain choices of $\rho\in \md(AB)$ and $\sigma\in\md(A'B')$, $\rho$ cannot be converted to $\sigma$ by $\locc$. It is then interesting to determine how well we can \textit{approximate} $\sigma$ with $\rho$ and $\locc$. To this end, one can associate with every distance $D$ defined on quantum states the conversion distance
\begin{equation}\label{eq:convdist}
	D(\rho \to \sigma) = \inf_{\mN\in\locc(AB \to A'B')}D(\mN(\rho), \sigma).
\end{equation}
This conversion distance determines how close to $\sigma$, with respect to the distance $D$, one can convert $\rho$ using only LOCC. A commonly used distance in quantum information is the trace distance $T(\sigma, \tau)= \frac{1}{2}\N{\sigma - \tau}_1$. The conversion distance associated with it is
\begin{equation}\label{eq:tr-convdist}
	T(\rho \to \sigma) = \inf_{\mN\in\locc(AB \to A'B')}\frac{1}{2}\N{\mN(\rho) - \sigma}_1.
\end{equation}
This conversion distance has an operational interpretation in terms of a result in state discrimination known as Holevo-Helstrom Theorem~\cite{Hel69, Hol73} (see Ref.~\cite[Theorem 3.4]{Wat18} for a review). 
Indeed, if a single copy of either $\sigma \in \md(AB)$ or $\mN(\rho)\in\md(AB)$, with $\mN\in\locc$, is given with equal probability, then the maximum probability $p_{\max}$ of correctly identifying the given state is bounded by
\begin{equation}
    p_{\max} \ge \frac{1}{2}(1 + T(\rho \to \sigma)).
\end{equation}
Moreover, for every fixed $\rho$ and $\sigma$, there always exists an $\mN\in\locc$ such that $p_{\max}$ is arbitrarily close to this lower bound. In this sense, $T(\rho \to \sigma)$ describes how well we can approximate $\sigma$ given access to $\rho$ and $\locc$.

Another distance used often in quantum information is the purified distance $P(\sigma, \tau) = \inf_{\psi_\sigma, \psi_\tau}T(\psi_\sigma, \psi_\tau)$~\cite{TCR10, GT20}, where the infimum runs over all purifications $\psi_\sigma$ and $\psi_\tau$ of $\sigma$ and $\tau$, respectively. In this case, the conversion distance is analogously defined as
\begin{equation}\label{eq:pur-convdist}
	P(\rho \to \sigma) = \inf_{\mN\in\locc(AB \to A'B')}P(\mN(\rho), \sigma).
\end{equation}

On pure states $\psi$, $\varphi \in \pure(AB)$, we define what we call star conversion distances via
\begin{equation} \label{eq:starconvdist}
	D_\star(\psi \to \varphi) = \min_{\vec{r}\succ \p}D(\vec{r}, \q),
\end{equation}
where $\p$, $\q \in \prob^\downarrow(|A|)$ are the Schmidt coefficients of $\psi$ and $\varphi$, respectively, and $D(\vec{r}, \q) = D(\diag(\vec{r}), \diag(\q))$. In general $D_\star(\psi \to \varphi) \neq D(\psi \to \varphi)$. However, we show in Appendix~\ref{sec:pur-dist} that in the case of the purified distance, the two conversion distances coincide on pure states.
\begin{theorem}
	Let $\psi$, $\varphi \in \pure(AB)$. Then $P(\psi \to \varphi) = P_\star(\psi \to \varphi)$.
\end{theorem}
This, in turn, is useful to show that $T(\psi \to \varphi)$ and $T_\star(\psi \to \varphi)$ are topologically equivalent on pure states, which means that if one can approximate $\varphi$ arbitrarily well with a sequence of states $\Set{\psi_n}$ in the sense that $\lim_{n \to \infty}T(\psi_n \to \varphi)=0$, then the same is true for $T_\star$, and vice-versa. This is guaranteed by the following result (Appendix~\ref{sec:star-conv-dist}).
\begin{lemma}\label{le:top-eq}
	Let $\psi$, $\varphi \in \pure(AB)$. Then
	\begin{equation}\label{equiv}
		\frac{1}{2}\left[T_\star (\psi \to \varphi)\right]^2 \leq T(\psi \to \varphi) \leq \sqrt{2T_\star(\psi \to \varphi)}.
	\end{equation}
\end{lemma}

We are particularly interested in $T_\star(\psi \to \varphi)$ because using tools from  approximate majorization~\cite{MOA11, NV01, Nie02, Tor70, Tor91, Ren16, HOS18}, one can derive a closed-form expression for it (Appendix~\ref{sec:star-conv-dist}).
 
\begin{theorem}\label{th:closeddist}
	Let $\psi$, $\varphi \in \pure(AB)$ and let $\p$, $\q \in \prob^\downarrow(|A|)$ be their corresponding Schmidt coefficients. Then,
	\begin{equation}
		T_\star(\psi \to \varphi) = \max_{k \in [\SR(\psi)]}\Set{\N{\p}_{(k)} - \N{\q}_{(k)}}.
	\end{equation}
	If $\xi\in \pure(AB)$ is separable and $\psi$, $\varphi$ are not,
	\begin{equation}
		T_{\star}(\psi \to \varphi)<T_\star(\xi \to \varphi).
	\end{equation}
\end{theorem}
This theorem provides, to the best of our knowledge, the first algorithm to compute a conversion distance in LOCC with a finite number of steps. Indeed, if $\psi$, $\varphi$, and their Schmidt coefficients are known, then one can compute $T_\star(\psi \to \varphi)$ using a finite memory, and a finite (perhaps very large) number of operations (additions or subtractions). This is not the case for the other conversion distances because they require a minimization over LOCC, which is in general unfeasible. The second part of the theorem shows that any entangled state is more useful in the approximation of all other entangled states than any separable state. 

Whilst Theorem~\ref{th:closeddist} is of independent interest, for example in the characterization of entanglement distillation and dilution~\cite{TFG23}, we discuss in the following how it yields new results concerning the embezzlement of entanglement. When referring to the conversion distance and star conversion distance, we will thus refer to the versions based on the trace distance from here on.

\section{Entanglement embezzlement}
As mentioned in the introduction, it is impossible to create additional entanglement with $\locc$ alone~\cite{VPRK97, PV07, CG19, CLP17}. If $\rho\in\md(A'B')$ is an entangled state, this implies that there cannot exist a $\psi\in\pure(AB)$ and a channel  $\mN\in\locc(AB\to A  BA'B')$ such that $\mN(\psi)=\psi\otimes\rho$, because this would increase the total amount of entanglement between  systems $AA'$ and $BB'$ with respect to any additive entanglement measure. It might however be possible to \textit{approximate} $\psi\otimes\rho$ in the sense that $T(\psi\to\psi\otimes\rho)\le\varepsilon$ for a small $\varepsilon$. In this case, it  is hard to distinguish $\psi\otimes\rho$ from the approximation. By keeping the systems $A'B'$, one would thus embezzle entanglement from the owner of $\psi$ - and if one would be able to make $\varepsilon$ arbitrarily small, it would be impossible to detect.

In fact, in Ref.~\cite{vDH03}, van Dam and Hayden showed that it is possible to  embezzle \textit{any} bipartite state $\sigma\in \md(A'B')$ arbitrarily well from a family of pure states $\Set{\chi_n^{AB}}_{n \in \mbN}$ in the sense that $\lim_{n\to\infty} T(\chi_n\to\chi_n\otimes\sigma)=0$. This implies that an arbitrarily good approximation of any $\sigma$ can be embezzled from $\chi_n$ whilst changing $\chi_n$ arbitrarily little, as long as $n$ is large enough. This motivates the following definition.
\begin{definition}\label{def:uef}
	A family of pure bipartite states $\Set{\chi_n}_{n\in \mbN}$ is called a \emph{universal embezzling family} if $\lim_{n \to \infty} T(\chi_n \to \chi_n \otimes \sigma)  = 0$ for every bipartite finite dimensional state $\sigma$.
\end{definition}
This definition is very similar to the one provided in Refs.~\cite{vDH03, LW14}. The main difference is that these works only consider protocols using local operations (LO), while in this work, we allow for classical communication too. The set of operations that we consider is therefore larger, and as a result, if a family of states is not an embezzling family according to our definition, then it is not an embezzling family in the sense of Refs.~\cite{vDH03, LW14}. Surprisingly, the original embezzling family proposed in Ref.~\cite{vDH03} is rather unique in a sense that we will specify later, even when we allow for classical communication. Another difference with Refs.~\cite{vDH03, LW14} is that in those works the fidelity was used to quantify the conversion error. Since the fidelity and trace distance are topologically equivalent, this is irrelevant.

The term `universal' in Definition~\ref{def:uef} underlines the property that any bipartite state can be embezzled. Since for every bipartite state $\sigma \in \md(AB)$, where $A = B = m$, it is possible to LOOC-convert the maximally entangled state $\Phi_m$ into $\sigma$ (see, e.g., Proposition~\ref{prop:puretomixed}), it is enough to check whether $\lim_{n \to \infty} T(\chi_n \to \chi_n \otimes \Phi_m)  = 0$ for all $m \in \mbN$ to determine if $\Set{\chi_n}_{n \in \mbN}$ is a universal embezzling family: This follows for example by combining Lemma~\ref{le:top-eq} and the triangular inequality for the star conversion distance proven in Appendix~\ref{sec:star-conv-dist}. Even simpler, it is in fact equivalent to only require $\lim_{n \to \infty} T(\chi_n \to \chi_n \otimes \Phi_2)  = 0$~\cite[Lemma~2 for the case of embezzlement with LO]{LW14}, because if one can embezzle enough copies of $\Phi_2$, then one can convert them into $\Phi_m$ with LOCC (see Appendix~\ref{sec:emb-fam} for more details). It is also important to note that technically, one could have required that $\liminf_{n \to \infty} T(\chi_n \to \chi_n \otimes \sigma)  = 0$, since this would also allow to embezzle any state arbitrarily well. However, since one can always choose a subfamily, we decided to keep the definition in line with Ref.~\cite{LW14}. Lastly, due to Eq.~\eqref{equiv}, we can replace $T(\chi_n \to \chi_n \otimes \sigma)$ in Definition~\ref{def:uef} with $T_\star(\chi_n \to \chi_n \otimes \Phi_2)$. Thanks to the closed formula in Theorem~\ref{th:closeddist}, we obtain the following complete characterization of universal embezzling families (Appendix~\ref{sec:emb-fam}).
\begin{theorem}\label{thm:charEmbFam}
	A family of pure bipartite states $\Set{\chi_n}_{n\in \mbN}$ with corresponding Schmidt coefficients $\Set{\p^{(n)}}_{n \in\mbN}$ is a universal embezzling family if and only if
	\begin{equation}\label{eq:charEmbFam}
		\lim_{n \to \infty} \max_{l \in [A_n]}\Set{\N{\p^{(n)}}_{{(2l -1)}} - \N{\p^{(n)}}_{{(l-1)}} } = 0,
	\end{equation}
	where $A_n = \ceil{\SR(\chi_n)/2}$ is the ceiling of half the Schmidt rank of $\chi_n$.
\end{theorem}
The problem of determining if a family of states is a universal embezzling family has therefore been restated as a rather simple optimization problem, which in many cases can be solved numerically or even analytically. By choosing $l=1$ in Eq.~\eqref{eq:charEmbFam} one obtains the following necessary condition for a universal embezzling family.
\begin{corollary}[{cf. Ref.~\cite[Lemma~3 (LO)]{LW14}}]
	If a family of pure bipartite states $\Set{\chi_n}_{n \in \mbN}$ is a universal embezzling family, then $\lim_{n \to \infty} p_1^{(n)} =0$, where $\p^{(n)} \in \prob^\downarrow (d_n)$ are the Schmidt coefficients of $\chi_n$.
\end{corollary}
This implies that if a family of pure bipartite states $\Set{\chi_n}_{n \in \mbN}$ is a universal embezzling family, then $\lim_{n \to \infty} \SR(\chi_n)=+\infty$, where $\SR(\chi_n)$ is the Schmidt rank of $\chi_n$. Indeed, considering that the entries of any $\p^{(n)}$ sum to one, if the largest entry converges to zero, then the number of non-zero entries must diverge.

When looking for candidates for universally embezzling families, it is common~\cite{vDH03,LW14} to consider families of bipartite states
\begin{equation}\label{eq:famstates}
	\Ket{\chi_n} = \frac{1}{\sqrt{F_n}}\sum_{x=1}^n \sqrt{f(x)} \Ket{xx}_{AB},
\end{equation}
defined by a function $f\colon \mbN \to \mbR^+$, where $F_n = \sum_{x = 1}^n f(x)$ is a normalization constant. With this choice, one ensures that the families of states have a common structure, i.e., $\chi_{m>n}$ is obtained from $\chi_n$ by appending additional coefficients and renormalizing. By choosing $f(x)=x^{-1}$, one recovers the universal embezzling family introduced by van Dam and Hayden~\cite{vDH03}. This family is rather unique: If we assume that $f$  has a reasonable asymptotic behavior, specifically that it is asymptotically non-increasing and  $f(x)/ x^\alpha$ is asymptotically monotonic for all $\alpha\in\mbR$, then the family of states $\Set{\chi_n}$ is an embezzling family if and only if $f$ is asymptotically close to $x^{-1}$ in the sense that for every $\varepsilon>0$,
\begin{equation}\label{eq:subpoly}
	\lim_{x\to \infty}\frac{f(x)}{x^{-1-\varepsilon}} =  \infty \, , \quad \lim_{x\to \infty}\frac{f(x)}{x^{-1+\varepsilon}} = 0.
\end{equation}
As we will discuss now, the assumptions on the asymptotic behavior of $f$, which we require for technical reasons,  are not particularly restrictive. In entanglement theory, it is possible to consider, w.l.o.g., only states with non-increasing Schmidt coefficients. With the first assumption, we require that the family of states, at least asymptotically, has this behavior. The second assumption rules out functions that are asymptotically monotonic but have oscillating components, for an example of such a function see Appendix~\ref{sec:uniq-vdh}, Eq.~\eqref{eq:ex-non-incr-f}.

In Ref.~\cite{LW14}, it was shown that certain functions that differ from $x^{-1}$ by logarithmic factors lead to universal embezzling families too. Since these functions satisfy the constraints in Eq.~\eqref{eq:subpoly}, this is in accordance with our findings. In addition, in Appendix~\ref{sec:uniq-vdh}, we also show that many non-decreasing functions do not lead to universal embezzling families (see Propositions~\ref{prop:f-incr-vdh-alpha} and~\ref{prop:f-incr-vdh-k} for details).

A special case of the functions discussed so far are the functions $f_\alpha(x) = x^\alpha$ with $\alpha \in \mbR$. For the families of states $\Set{\chi_n^\alpha}$ generated by such functions, we analytically compute the exact value of $\lim_{n \to \infty}T_\star(\chi_n^\alpha \to \chi_n^\alpha \otimes \Phi_m)$ for $\alpha \geq -1$ and lower and upper bound it for $\alpha < -1$. The details of the computations can be found in Appendix~\ref{sec:gen-van-dam} and the results for $m =2$ are shown in Figure~\ref{fig:falpha}. Clearly, the limit of the star conversion distance is zero only for $\alpha = -1$, showing again the uniqueness of the choice made by van Dam and Hayden amongst the functions $f(x)=x^\alpha$.
\begin{figure}[htp]\centering
	\includegraphics[width=0.5\columnwidth]{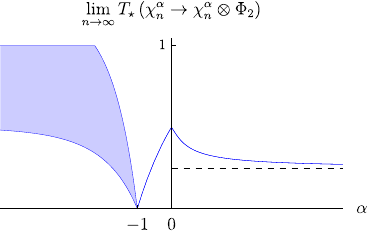}
	\caption{\textit{Uniqueness of the van Dam and Hayden embezzling family ---}
		The family of states $\Set{\chi^\alpha_n}_{n \in \mbN}$ introduced in the main text is a universal embezzling family if and only if $\alpha=-1$. This can be seen in the above plot showing the analytically derived exact value of $\lim_{n\to \infty}T_\star(\chi_n^\alpha \to \chi_n^\alpha \otimes \Phi_2)$ for $\alpha \ge -1$ and lower and upper bounds for $\alpha < -1$.}
	\label{fig:falpha} 
\end{figure}

\section{Conclusions}
In Proposition~\ref{prop:puretomixed} we provided necessary and sufficient conditions for a deterministic $\locc$-conversion from a pure bipartite state to a mixed bipartite state. This extends the results already known for pure to pure state $\locc$-conversions, whether deterministic or probabilistic~\cite{Nie99,JP99A}. We then exploited this result to prove the topological equivalence of the newly defined star conversion distance between pure states and the trace conversion distance commonly used in literature. The star conversion distance exhibits a closed formula (Theorem~\ref{th:closeddist}). This is remarkable, since the mixed state $\locc$-conversion problem is NP-hard~\cite{Gur03}.

The closed formula in turn allowed us to completely characterize universal embezzling families (see Definition~\ref{def:uef} and Refs.~\cite{vDH03,LW14} for an analogous definition for LO) in terms of a simple optimization problem stated in  Theorem~\ref{thm:charEmbFam}.
With this characterization at hand, we discussed the uniqueness of the van Dam and Hayden family~\cite{vDH03}. For specific families of states generalizing the van Dam and Hayden family, we explicitly evaluated their star conversion distance to maximally entangled states and showed that they are only universally embezzling if they are exactly the van Dam and Hayden family, see Figure~\ref{fig:falpha}. Therefore, the van Dam and Hayden embezzling family shows unique properties even for protocols that involve classical communication, as already noticed in Ref.~\cite{vDH03}. This suggests a direction for future work, namely, to determine whether LOCC embezzlement implies LO embezzlement (the other direction is trivial), and therefore to investigate if classical communication is relevant in entanglement embezzlement or not. It is worth noting that so far, the research on entanglement embezzlement focuses solely on families composed of pure states. A more comprehensive theory of embezzlement that includes families of mixed states is yet to be developed.

Originally introduced in the resource theory of entanglement as a generalization of catalysis~\cite{vDH03}, embezzlement has recently also been studied in other resource theories including non-uniformity~\cite{NMCJW15, GMNSYH15}, coherence~\cite{BCP14, BDG15, CZZZ19, ST22}, and athermality~\cite{BHORS13, HJ13, BHNOW15, NMCJW15, LBS21}. Moreover, fundamental limits for embezzlement have been proved in Ref.~\cite{RT22} and applied to the resource theories mentioned above. The aforementioned resource theories are related to the resource theory of entanglement via majorization, which is the tool that we used to derive the closed formula for the star conversion distance. A natural next step is to investigate whether it is possible to derive a similar formula in these other majorization-based resource theories. However, this task is not trivial: In the resource theory of non-uniformity, the majorization relation is inverted, and the free state is the maximally mixed state, which is fundamentally different from the free states in the resource theory of entanglement; in the resource theory of athermality, state conversion is described by relative majorization, which is a generalization of majorization. It is worth mentioning that in Ref.~\cite{BHNOW15} it has been shown that embezzlement of athermality allows to violate the second law of thermodynamics. However, by using the work distance~\cite{BHNOW15} as conversion distance, or by imposing physical constraints on the catalyst~\cite{NMCJW15}, e.g., finite dimension or finite energy expectation value, athermality embezzlement is no longer possible, thus restoring the validity of the second law of thermodynamics.

As described in the introduction, universal embezzling families have been useful in many applications~\cite{BDHSW14, BCR11, RV14, Gro53, Pis12,LTW13, DSV15, RV15, JLV20}. Our results provide an easy way to check whether a family is universally embezzling. Moreover, we ruled out large classes of potential candidates that are fundamentally different from the van Dam and Hayden family. The complete characterization of universally embezzling families therefore contributes to a more efficient usage of entanglement in technology.

\begin{acknowledgments}
	The authors thank an anonymous reviewer for suggesting a simpler proof of Theorem~\ref{th:closeddist}, which we included in this version of the manuscript. The authors acknowledge support from the Natural Sciences and Engineering Research Council of Canada (NSERC). E.~Z.\ acknowledges support from the Alberta Graduate Excellence Scholarship (AGES) and from the Eyes High International Doctoral Scholarship. T. T. acknowledges support from the Pacific Institute for the Mathematical Sciences (PIMS). The research and findings may not reflect those of the Institute.
\end{acknowledgments}

\bibliographystyle{quantum}

\newpage
\appendix
\counterwithin*{equation}{section}
\renewcommand{\theequation} {\thesection\arabic{equation}}

\newcounter{thm}[section]
\renewcommand{\thethm}{\thesection.\arabic{thm}}
\newtheorem{lemma-app}[thm]{Lemma}
\newtheorem{theorem-app}[thm]{Theorem}
\newtheorem{proposition-app}[thm]{Proposition}
\newtheorem{corollary-app}[thm]{Corollary}
\newtheorem{definition-app}[thm]{Definition}

\setcounter{theorems}{0}

\section{Pure to Mixed State Conversions with LOCC}\label{sec:pure-to-mix}
{In this section, we prove Proposition}~\ref{prop:puretomixed} of the main text, which we restate for readability. 
\begin{proposition}
	For $\psi \in \pure(AB)$ and $\sigma \in \md(AB)$, $\psi \xrightarrow\locc \sigma$ if and only if there exists a pure state decomposition $\Set{p_z, \chi_z}$ of $\sigma$ (i.e., $\sigma = \sum_z p_z \chi_z$) such that
	\begin{equation}\label{eq:pure-to-mix-cond}
		\min_{k\in [|A|]} \Bigl\{E_{k}(\psi) - \sum_z p_z E_{k}(\chi_z)\Bigr\} \geq 0 .
	\end{equation}
\end{proposition}

\begin{proof}
	We first assume that $\psi$ can be converted into $\sigma$ with $\locc$ operations. Then, according to Refs.~\cite{LP01, CS15}, there exists a protocol in which Alice performs a generalized measurement $\Set{M_z}$ and Bob performs a unitary transformation $U_z$ conditioned on the measurement's outcome such that $\sigma =\sum_{z} (M_z \otimes U_z)\psi (M_z \otimes U_z)^\dagger$. If Alice and Bob record the outcome of the measurement in a classical system $X$, the output of the protocol is $\sum_{z} p_z {\KBra{z}{z}}^X \otimes \chi_z^{AB} \coloneqq \Set{p_z, \chi_z}$, where
	\begin{equation}
		\Ket{\chi_z} = \frac{(M_z \otimes U_z)\Ket{\psi}}{\N{(M_z \otimes U_z)\Ket{\psi}}} \, , \quad p_z = \N{(M_z \otimes U_z)\Ket{\psi}}^2.
	\end{equation}
	This shows that Alice and Bob can convert $\psi$ into the ensemble $\Set{p_z, \chi_z}$ with $\locc$ operations, which is equivalent to the condition~\cite{Vid99, JP99A}
	\begin{equation}\label{eq:ext-nie}
		\min_{k \in [d]} \Bigl(E_{k}(\psi) - \sum_{z}p_z E_{k}(\chi_z)\Bigr)\geq 0,
	\end{equation}
	where $E_{k}$ was introduced in the main text and $d=|A|=|B|$ as per our convention. This proves the necessary condition.
	
	For the reverse, we assume that $\Set{p_z, \chi_z}$ is a pure state decomposition of $\sigma$ that satisfies
	\begin{equation}
		\min_{k \in [d]} \Bigl(E_{k}(\psi) - \sum_{z} p_z E_{k}(\chi_z)\Bigr)\geq 0.
	\end{equation}
	This is equivalent to $\psi \xrightarrow{\locc} \Set{p_z, \chi_z} = \sum_{z} p_z {\KBra{z}{z}}^X \otimes \chi_z^{AB}$~\cite{Vid99, JP99A}. Alice and Bob can trace out the classical system and they obtain $\sum_{z} p_z \chi_z = \sigma$, thus $\psi$ can be converted into $\sigma$ with $\locc$ operations.
\end{proof}

\section{Purified Conversion Distance}\label{sec:pur-dist}
In this section, we present some results concerning the purified conversion distance~\cite{TCR10,GT20}, which are useful to prove the theorems about the star conversion distance presented in the main text. The purified distance between two states $\rho$, $\sigma \in \md\left(AB\right)$ is defined as
\begin{equation}
	P\left(\rho, \sigma\right) = \sqrt{1 - F^2\left(\rho, \sigma\right)},
\end{equation}
where $F\left(\rho, \sigma\right) = \N{\sqrt{\rho}\sqrt{\sigma}}_1$ is the fidelity.
The purified distance is a metric and according to Uhlmann’s Theorem~\cite{Uhl76},
\begin{equation}\label{eq:pur-dist-min-ext}
	P\left(\rho, \sigma\right) = \min_{\psi, \varphi}T\left(\psi, \varphi\right),
\end{equation}
where $T\left(\rho, \sigma\right)=\frac{1}{2}\N{\rho-\sigma}_1$ is the trace distance and the minimization runs over all purifications $\psi$ and $\varphi$ of $\rho$ and $\sigma$, respectively. Importantly, the purified distance is topologically equivalent to the trace distance~\cite{TCR10},
\begin{equation}\label{eq:top-equiv}
	T\left(\rho, \sigma\right) \leq P\left(\rho, \sigma\right) \leq \sqrt{2 T\left(\rho, \sigma\right)}.
\end{equation}
We now recall the definition of the purified conversion distance and purified star conversion distance introduced in the main text,
\begin{equation}
	P\left(\rho \to \sigma\right) = \inf_{\mN\in \locc} P\left(\mN\left(\rho\right), \sigma\right)\, , \quad P_\star\left(\psi \to \varphi\right) = \min_{\vec{r} \succ \p} P\left(\vec{r}, \q\right),
\end{equation}
where $\p$, $\q \in \prob^\downarrow(d)$ are the Schmidt coefficients of $\psi$ and $\varphi$, respectively, and $P\left(\vec{r}, \q\right) = \sqrt{1 - F^2(\vec{r}, \q)}$ is the classical version of the purified distance with 
\begin{equation}
	F(\vec{r}, \q)\coloneqq F(\diag(\vec{r}),\diag(\q))=\sum_x \sqrt{r_x q_x}.
\end{equation}

We next prove that the two purified conversion distances recalled above are equal on pure states. To this end, we need the following Lemma.
\begin{lemma-app}\label{le:convex}
	Let $\q \in \prob(d)$. The function $f\colon \prob(d) \to [0,1]$ defined as $f(\vec{v}) = \left(\sum_{x=1}^d\sqrt{q_x v_x}\right)^2$ is concave.
\end{lemma-app}
\begin{proof}
	For details about concave functions, see~Ref.~\cite{RV73}. Here, we are going to use that a twice differentiable function is concave if and only if its Hessian matrix is negative semi-definite. The  functions $f_{x,y}(v_x,v_y):=\sqrt{q_xv_x}\sqrt{q_yv_y}$ are twice differentiable in $v_x,v_y$ for $v_x,v_y\in(0,1]$ and their Hessian matrices are given by
	\begin{equation}
		H\left(v_x, v_y\right) = \frac{1}{4}\begin{bmatrix}
			-\sqrt{\frac{q_xq_yv_y}{v_x^3}} & \sqrt{\frac{q_xq_y}{v_xv_y}}\\
			\sqrt{\frac{q_xq_y}{v_xv_y}} & -\sqrt{\frac{q_xq_yv_x}{v_y^3}}
		\end{bmatrix}.
	\end{equation}
	Since $\det H(v_x,v_y) =0$, one of the eigenvalues of $H(v_x, v_y)$ is $0$ and the other is equal to the trace of $H(v_x, v_y)$, which is smaller than or equal to zero. This implies that the functions $f_{x,y}(v_x,v_y)$ are concave for $v_x,v_y\in(0,1]$. Moreover, since $0=f_{x,y}(0,v_y)=f_{x,y}(v_x,0)=f_{x,y}(0,0)$, it is easy to see that $f_{x,y}(v_x,v_y)$ is in fact concave for $v_x,v_y\in[0,1]$. From this follows that $f$ is concave, since 
	\begin{equation}
		f\left(\vec{v}\right) = \sum_x v_x q_x + \sum_y \sum_{x\neq y}\sqrt{q_xv_x}\sqrt{q_yv_y}
	\end{equation}
	is the sum of concave functions.
\end{proof}
We are now ready to prove the promised theorem. 
\begin{theorem}\label{th:pur-dist}
	Let $\psi, \varphi \in \pure(AB)$, then $P\left(\psi \to \varphi\right) = P_\star\left(\psi \to \varphi\right)$.
\end{theorem}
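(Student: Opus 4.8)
The plan is to establish the two inequalities $d^P(\psi\to\varphi)\le d^P_\star(\psi\to\varphi)$ and $d^P(\psi\to\varphi)\ge d^P_\star(\psi\to\varphi)$ separately, writing $\q$ for the sorted Schmidt vector of $\varphi$ and repeatedly using that, because $\varphi$ is pure, $F^2(\rho,\varphi)=\langle\varphi|\rho|\varphi\rangle$ for every $\rho\in\md(AB)$.

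For the easy inequality $d^P\le d^P_\star$, I would fix an arbitrary feasible $\r\succ\p$ and invoke the pure-state LOCC criterion (the single-state case of Eq.~\eqref{eq:ext-nie}): since $\r\succ\p$, there is a deterministic $\mN\in\locc$ converting $\psi$ into a pure state $\rho$ with Schmidt coefficients $\r$. Appending local unitaries, I may align the Schmidt bases of $\rho$ with those of $\varphi$ and match the coefficients in decreasing order, so that $F(\rho,\varphi)=\sum_x\sqrt{r_xq_x}=F(\r,\q)$ and hence $P(\mN(\psi),\varphi)=P(\r,\q)$. Taking the infimum over $\locc$ on the left and then the minimum over $\r\succ\p$ on the right gives $d^P(\psi\to\varphi)\le d^P_\star(\psi\to\varphi)$.

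The hard part will be the reverse inequality, and this is where the earlier results enter. I would fix an arbitrary $\mN\in\locc$ and use the structure of LOCC maps on a pure state from the proof of Proposition~\ref{prop:puretomixed} to write $\mN(\psi)=\sum_z p_z\chi_z$ as a pure-state ensemble obeying $\sum_z p_z E_k(\chi_z)\le E_k(\psi)$ for all $k$. Purity of $\varphi$ then gives $F^2(\mN(\psi),\varphi)=\sum_z p_z F^2(\chi_z,\varphi)$, and I would bound each overlap by its maximal Schmidt value, $F(\chi_z,\varphi)=\A{\langle\chi_z|\varphi\rangle}\le\sum_x\sqrt{\lambda_x(\chi_z)q_x}$, where $\lambda(\chi_z)$ is the sorted Schmidt vector of $\chi_z$; this follows from Uhlmann's theorem applied to the reduced states together with the singular-value bound $\N{\sqrt{\rho}\sqrt{\tau}}_1\le\sum_x\sqrt{\lambda_x^\downarrow(\rho)\lambda_x^\downarrow(\tau)}$.

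To close the argument I would set $\s=\sum_z p_z\lambda(\chi_z)$. Being a convex combination of decreasingly ordered vectors, $\s$ is itself decreasingly ordered, so its tail sums satisfy $E_k(\s)=\sum_z p_z E_k(\chi_z)\le E_k(\psi)=E_k(\p)$ for all $k$, which is precisely the majorization $\s\succ\p$. Combining the concavity of $f(\vec v)=(\sum_x\sqrt{q_xv_x})^2$ from Lemma~\ref{le:convex} with Jensen's inequality yields $F^2(\mN(\psi),\varphi)\le\sum_z p_z f(\lambda(\chi_z))\le f(\s)=F^2(\s,\q)$, whence $P(\mN(\psi),\varphi)\ge P(\s,\q)\ge\min_{\r\succ\p}P(\r,\q)=d^P_\star(\psi\to\varphi)$ since $\s$ is feasible. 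Taking the infimum over $\mN$ finishes the proof. The genuine obstacle is this second direction: one must reduce the quantum fidelity against the generally mixed output to a classical fidelity of Schmidt vectors and then recognize the averaged spectrum $\s$ as a majorization-feasible point, so that concavity and the monotonicity constraint $\sum_z p_z E_k(\chi_z)\le E_k(\psi)$ can be combined.
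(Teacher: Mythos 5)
Your proposal is correct and follows essentially the same route as the paper's proof: Nielsen's theorem handles the direction $d^P\le d^P_\star$, and the reverse direction uses the ensemble decomposition from Proposition~\ref{prop:puretomixed}, the overlap bound $\A{\Braket{\chi_z|\varphi}}\le\sum_x\sqrt{\lambda_x(\chi_z)q_x}$ (which the paper gets from the von Neumann trace inequality, you via Uhlmann plus the singular-value bound --- the same inequality at heart), the averaged spectrum $\s=\sum_z p_z\lambda(\chi_z)$ with $\s\succ\p$, and finally Lemma~\ref{le:convex} with Jensen's inequality. Your explicit use of the purity of $\varphi$ to linearize $F^2(\mN(\psi),\varphi)$ over the ensemble is a slightly cleaner packaging of what the paper does by introducing the aligned state $\tilde\sigma$, but the argument is the same.
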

\begin{proof}
	We assume w.l.o.g. that all pure states are in standard form, that is, $\Ket{\psi} = \sum_x \sqrt{p_x} \Ket{xx}_{AB}$, where $\p \in \prob^\downarrow(d)$ and $\Set{\Ket{x}_A}$ and $\Set{\Ket{x}_B}$ are fixed bases for $A$ and $B$, respectively (see main text for more details). Let $\p,\q\in \prob^\downarrow(d)$ be the Schmidt coefficients or $\psi$ and $\varphi$, respectively. With $\vec{r} \in \prob^\downarrow(d)$, as a consequence of Nielsen's Theorem~\cite{Nie99}, $\vec{r} \succ \p$ if and only if there exists an $\mN \in \locc$ such that $\mN(\psi)\in \pure(AB)$ has Schmidt coefficients $\vec{r}$. We notice that
	\begin{equation} \label{eq:purdistclass}
		\begin{aligned}
			P(\mN(\psi), \varphi) &= \sqrt{1 - F^2(\mN(\psi), \varphi)} = \sqrt{1 - \A{\Braket{\mN(\psi)|\varphi}}^2} \\
			&= \sqrt{1 - \left(\sum_x \sqrt{r_x q_x}\right)^2} = \sqrt{1 - F^2(\vec{r}, \q)} \\
			&= P(\vec{r}, \q)  .
		\end{aligned}
	\end{equation}
	This implies that
	\begin{equation}\label{eq:first-dist-ineq}
		\begin{aligned}
			P_\star\left(\psi \to \varphi\right) &= \min_{\vec{r} \succ \p} P\left(\vec{r}, \q\right) \\
			&= \min_{\substack{\mN\in \locc \\ \mN(\psi) \in \pure(AB)}} P\left(\mN\left(\psi\right), \varphi\right) \\
			&\geq \inf_{\mN\in \locc} P\left(\mN\left(\psi\right), \varphi\right) \\
			&= P\left(\psi \to \varphi\right).
		\end{aligned}
	\end{equation}
	
	This inequality follows from the definition of the star purified conversion distance because the minimization in the star conversion distance is done over a smaller set. The non-trivial part is to show that the opposite inequality holds as well. To this end, we want to show that for every mixed state $\sigma$ such that $\psi \xrightarrow{\locc} \sigma$, there exists a pure state $\chi$ such that $\psi \xrightarrow\locc \chi$ and $P(\sigma, \varphi) \geq P(\chi, \varphi)$. It is then sufficient to consider only pure output states for the computation of the purified conversion distance, which implies the reverse inequality.
	
	Let $\sigma = \mM(\psi) \in \md(AB)$, where $\mM \in \locc$. Also, let $\Set{t_z, \chi_z}$ be a pure state decomposition of $\sigma$ that satisfies Eq.~\eqref{eq:pure-to-mix-cond} (where the $\chi_z$ are not necessarily in standard form), and $\s^{(z)}$ be the Schmidt coefficient of $\chi_z$ for every $z$.
	Furthermore, for every $\chi_z$, let $\tilde{\chi}_z$ be the pure state in standard form that is equal to $\chi_z$ up to local unitaries, let $\tilde{\sigma}=\sum_{z}t_z\tilde{\chi}_z$,
	and define $\chi \in \pure(AB)$ as the pure bipartite state (in standard form) with Schmidt coefficients $\s = \sum_z t_z \s^{(z)}$. We notice that for all $k \in [d]$
	\begin{equation} \label{eq:chi-psi-monotone}
		\begin{aligned}
			E_{k}(\chi) &= 1-\sum_{x=1}^k s_x = 1-\sum_{x =1}^k \sum_{z} t_z s_x^{(z)} \\
			&= 1-\sum_z t_z(1- E_{k}(\chi_z)) = \sum_z t_zE_{k}(\chi_z)\\
			&\leq E_{k}(\psi) ,
		\end{aligned}		
	\end{equation}
	where the last inequality follows from the fact that $\Set{t_z, \chi_z}$ satisfies Eq.~\eqref{eq:pure-to-mix-cond}. This implies that $\s \succ \p$.
	
	The next step is to show that $P(\sigma, \varphi) \geq P(\s, \q)$. First, we note that due to the von Neumann trace inequality~\cite{Neu37, Mir75}, 
	\begin{equation}\label{eq:fid1}
		F^2\left(\tilde{\sigma}, \varphi\right) = \sum_{z}t_z \left(\sum_x \sqrt{q_x s_x^{(z)}}\right)^2\ge F^2\left(\sigma, \varphi\right) 
	\end{equation}
	and 
	\begin{equation}\label{eq:fid2}
		F^2(\chi, \varphi) = F^2\left(\s, \q\right) = \left(\sum_x \sqrt{q_x\sum_zt_z s_x^{(z)}}\right)^2 .
	\end{equation}
	Second, we introduce the concave function $f\left(\vec{v}\right) = \left(\sum_x \sqrt{q_x v_x}\right)^2$, for $\vec{v} \in \prob\left(d\right)$ (see Lemma~\ref{le:convex}), and rewrite Eq.~\eqref{eq:fid1} and Eq.~\eqref{eq:fid2} as
	\begin{equation}
		F^2\left(\sigma, \varphi\right) \le F^2\left(\tilde{\sigma}, \varphi\right) = \sum_{z}t_z f\left(\s^{(z)}\right) \, , \quad F^2\left(\s, \q\right) = f\left(\sum_z t_z \s^{(z)}\right) .
	\end{equation}
	Finally, the concavity of $f$ and the multidimensional Jensen's inequality~\cite{Fer67} imply
	\begin{equation}
		F^2\left(\sigma, \varphi\right) \le F^2\left(\tilde{\sigma}, \varphi\right) = \sum_{z}t_z f\left(\s^{(z)}\right) \leq f\left(\sum_z t_z \s^{(z)}\right) = F^2\left(\s, \q\right) ,
	\end{equation}
	which is equivalent to $P(\sigma,\varphi)\geq P(\s, \q)$. So far, we have shown that for every $\sigma=\mM(\psi)$, with $\mM\in \locc$, there exists an $\s^{(\mM(\psi))} \in \prob^\downarrow(d)$ such that $\s^{(\mM(\psi))} \succ \p$ and $P(\sigma,\varphi)\geq P(\s^{(\mM(\psi))}, \q)$, thus
	\begin{equation}\label{eq:second-dist-ineq}
		\begin{aligned}
			P(\psi \to \varphi) &= \inf_{\mM\in \locc} P\left(\mM\left(\psi\right), \varphi\right)\\
			&\geq \inf_{\mM\in \locc} P\left(\s^{(\mM(\psi))}, \q\right) \\
			&\geq \min_{\vec{r} \succ \p} P\left(\vec{r}, \q\right) \\
			&= P_\star\left(\psi \to \varphi\right) .
		\end{aligned}
	\end{equation}
	Eq.~\eqref{eq:first-dist-ineq} and Eq.~\eqref{eq:second-dist-ineq} imply that $P(\psi \to \varphi) =P_\star\left(\psi \to \varphi\right)$, and this concludes the proof.
\end{proof}

\section{The Star Conversion Distance}\label{sec:star-conv-dist}
In this section, we discuss properties of the star conversion distance based on the trace distance and provide proofs omitted in the main text. We begin by proving that the trace star conversion distance is topologically equivalent to the standard conversion distance defined via the trace distance. 
\begin{lemma}
	Let $\psi$, $\varphi \in \pure(AB)$. Then
	\begin{equation}
		\frac{1}{2}\left[T_\star (\psi \to \varphi)\right]^2 \leq T(\psi \to \varphi) \leq \sqrt{2T_\star(\psi \to \varphi)}.
	\end{equation}
\end{lemma}
\begin{proof}
Using Eq.~\eqref{eq:top-equiv} and Theorem~\ref{th:pur-dist}, we obtain\begin{equation}
	\begin{aligned}
		T(\psi \to \varphi) &= \inf_{\mN \in \locc} T(\mN(\psi), \varphi) \\
		&\leq \inf_{\mN \in \locc} P(\mN(\psi), \varphi) = P(\psi \to \varphi) = P_\star(\psi \to \varphi)= \min_{\vec{r} \succ \p} P(\vec{r},\q) \\
		&\leq \min_{\vec{r} \succ \p}\sqrt{2T(\vec{r}, \q)} = \sqrt{2T_\star(\psi \to \varphi)},
	\end{aligned}
\end{equation}
where $T(\p, \q) \coloneqq T(\diag(\p), \diag(\q))$. Analogously,
\begin{equation}
	\begin{aligned}
		T_\star(\psi \to \varphi) &=\min_{\vec{r} \succ \p}T(\vec{r}, \q) \\
		&\leq \min_{\vec{r} \succ \p} P(\vec{r},\q) = P_\star(\psi \to \varphi) = P(\psi \to \varphi) = \inf_{\mN \in \locc} P(\mN(\psi), \varphi)   \\
		&\leq  \inf_{\mN \in \locc} \sqrt{2T(\mN(\psi), \varphi)} = \sqrt{2T(\psi \to \varphi)}.
	\end{aligned}
\end{equation}
\end{proof}
This proves that $T_\star\left(\psi \to \varphi\right)$ is topologically equivalent to $T\left(\psi \to \varphi\right)$.

In the following, we provide the proof of Theorem~\ref{th:closeddist} of the main text, which we restate below for readability. In other words, we derive a closed formula for $T_\star\left(\psi \to \varphi\right)$ based on the Schmidt coefficients $\p,\q \in \prob^\downarrow(d)$ of $\psi$ and $\varphi$, respectively. 

\begin{theorem}\label{th:closeddist-app}
	Let $\psi$, $\varphi \in \pure(AB)$ and let $\p$, $\q \in \prob^\downarrow(|A|)$ be their corresponding Schmidt coefficients. Then,
	\begin{equation}
		T_\star(\psi \to \varphi) = \max_{k \in [\SR(\psi)]}\Set{\N{\p}_{(k)} - \N{\q}_{(k)}}.
	\end{equation}
	If $\xi\in \pure(AB)$ is separable and $\psi$, $\varphi$ are not, then 
	\begin{equation}
		T_\star(\psi \to \varphi)<T_\star(\xi \to \varphi).
	\end{equation}
\end{theorem}

\begin{proof}
	Let $\varepsilon \in [0,1]$ and 
	$\mb^\varepsilon_\q = \Set{\q'   :   \frac{1}{2}\N{\q - \q'}_1 \le \varepsilon}$. As shown in Ref.~\cite{HOS18}, there exist probability vectors   
	$\bar\q^{(\varepsilon)} \in \mb^\varepsilon_\q$ called steepest $\varepsilon$-approximations of $\q$ such that $\bar\q^{(\varepsilon)} \succ \q'$ for all $\q' \in \mb^\varepsilon_\q$. Moreover, these steepest $\varepsilon$-approximations can be constructed explicitly: If $ \frac{1}{2}\N{\q - \e_1}_1 \le \varepsilon$, then $\bar\q^{(\varepsilon)} = \e_1$, otherwise let $k_\varepsilon \in [d]$ be the index satisfying
	\begin{equation}\label{eq:k}
		\sum_{x = 1}^{k_\varepsilon} q_x \leq 1 - \varepsilon \quad \text{and} \quad \sum_{x = 1}^{k_\varepsilon +1} q_x > 1- \varepsilon   .
	\end{equation}
	The components of $\bar\q^{(\varepsilon)}$ are then given by
	\begin{equation} \label{eq:steepestEpsilonApprox}
		\bar q^{(\varepsilon)}_x =
		\begin{cases}
			q_1 + \varepsilon & \text{if }   x = 1,\\
			q_x & \text{if }   x \in \Set{2, \dots, k_\varepsilon},\\
			1 - \varepsilon - \sum_{x = 1}^{k_\varepsilon} q_x & \text{if }   x = k_\varepsilon +1,\\
			0 & \text{otherwise}.
		\end{cases}  
	\end{equation}
	We now show that 
	\begin{equation}
		\min_{\vec{r} \succ \p}\Set{\frac{1}{2}\N{\q - \vec{r}}_1} = \min\Set{\varepsilon \in [0,1]  :   \bar\q^{(\varepsilon)} \succ \p}.
	\end{equation}
	First, we notice that if $\vec{r}_\star$ is an optimizer of $\min_{\vec{r} \succ \p}\Set{\frac{1}{2}\N{\q - \vec{r}}_1} $ and $\tilde\varepsilon = \frac{1}{2}\N{\q- \vec{r}_\star}_1$, then $\vec{r}_\star \in \mb^{\tilde\varepsilon}_\q$ and therefore $\bar\q^{(\tilde\varepsilon)} \succ \vec{r}_\star$. By transitivity, we also have $\bar\q^{(\tilde\varepsilon)} \succ \p$, which implies that
	\begin{equation}
		\min\Set{\varepsilon \in [0,1]  :   \bar\q^{(\varepsilon)} \succ \p} \leq \tilde\varepsilon = \min_{\vec{r} \succ \p}\Set{\frac{1}{2}\N{\q - \vec{r}}_1}.
	\end{equation}
	For the reverse inequality, let $\varepsilon_\star$ be the optimizer of $\min\Set{\varepsilon \in [0,1]  :   \bar\q^{(\varepsilon)} \succ \p} $. By definition, $\bar\q^{(\varepsilon_\star)}\succ\p$ and $\frac{1}{2}\N{\q -\bar\q^{(\varepsilon_\star)} }_1 \le \varepsilon_\star$, thus 
	\begin{equation}
		\min_{\vec{r} \succ \p}\Set{\frac{1}{2}\N{\q - \vec{r}}_1} \le \varepsilon_\star = \min\Set{\varepsilon \in [0,1]  :   \bar\q^{(\varepsilon)} \succ \p}.
	\end{equation}
	This shows that
	\begin{equation}\label{eq:d-star-min}
		T_\star(\psi \to \varphi) = \min_{\vec{r} \in \prob^\downarrow(d)}\Set{\frac{1}{2}\N{\q - \vec{r}}_1   :   \vec{r} \succ \p} = \min\Set{\varepsilon \in [0,1]  :   \bar\q^{(\varepsilon)} \succ \p}.
	\end{equation}
	
	We observe that $\N{\bar\q^{(\varepsilon)}}_{(k)} = \min\Set{1, \N{\q}_{(k)} + \varepsilon}$, thus the condition $\bar\q^{\varepsilon}\succ \p$ is equivalent to
	\begin{equation}
		\N{\p}_{(k)} \leq \min\Set{1, \N{\q}_{(k)} + \varepsilon}\,, \quad \forall k \in [d].
	\end{equation}
	This expression is further simplified by noticing that $\p$ is a probability vector, and therefore $\N{\p}_{(k)} \leq 1$ for all $k \in [d]$. Consequently,
	\begin{equation}
		\bar\q^{\varepsilon} \succ \p \quad \Leftrightarrow\quad \N{\p}_{(k)} \leq \N{\q}_{(k)} + \varepsilon,\, \forall k \in [d] \quad \Leftrightarrow\quad \max_{k \in [d]} \Set{\N{\p}_{(k)} - \N{\q}_{(k)}} \leq \varepsilon.
	\end{equation}
	In combination with the minimization in Eq.~\eqref{eq:d-star-min} follows that
	\begin{equation}\label{eq:stardist}
		T_\star(\psi \to \varphi) = \max_{k \in [d]} \Set{\N{\p}_{(k)} - \N{\q}_{(k)}}.
	\end{equation}
	To conclude the proof of the first part of the theorem, we observe that for $k \geq \SR(\psi^{AB})$, $\N{\p}_{(k)} = 1$ and $\N{\q}_{(k)}$ is non-decreasing with $k$, thus we can restrict the maximization to $k \leq \SR(\psi^{AB})$. 
	
	For the second part, we notice that according to Eq.~\eqref{eq:d-star-min},
	\begin{equation}
		T_\star(\xi \to \varphi) = \min_{\vec{r} \in \prob^\downarrow(d)}\Set{\frac{1}{2}\N{\q - \vec{r}}_1   :   \vec{r} \succ \e_1} =\frac{1}{2}\N{\q - \e_1}_1 =1 - q_1.
	\end{equation}
	Furthermore, from $\e_1 \succ \p$ and the transitivity of the majorization-relation, it follows that
	\begin{equation}\label{eq:conv-dist-ineq}
		\begin{aligned}
			T_\star(\psi \to \varphi)&=\min_{\vec{r} \in \prob^\downarrow(d)}\Set{\frac{1}{2}\N{\q - \vec{r}}_1   :   \vec{r} \succ \p} \\
			&\leq \min_{\vec{r} \in \prob^\downarrow(d)}\Set{\frac{1}{2}\N{\q - \vec{r}}_1   :   \vec{r} \succ \e_1}\\
			&= T_\star(\xi \to \varphi).
		\end{aligned}
	\end{equation}
	
	To rule out equality, suppose that $T_\star(\psi \to \varphi) = \max_{k \in [d]} \Set{\N{\p}_{(k)} - \N{\q}_{(k)}} =1-q_1$, and denote with $k_\star$ an index that achieves this maximum. Then
	\begin{equation}
		\N{\p}_{(k_\star)} = 1 + \N{\q}_{(k_\star)} - q_1.
	\end{equation}
	From this expression follows that $\N{\p}_{(k_\star)} \leq 1$ only if either $\q = \e_1$ or if $k_\star =1$, and therefore $\p = \e_1$. These conditions are in contrast with the assumption that $\psi$ and $\varphi$ are not separable. As a consequence, equality in Eq.~\eqref{eq:conv-dist-ineq} is unachievable, which proves the second part of the theorem.
\end{proof}

Next, we show that the star conversion distance satisfies a triangle inequality. Let $\psi$, $\varphi$, and $\chi \in \pure(AB)$ and let $\p$, $\q$, and $\vec{r}$ be their Schmidt coefficients. With the help of Eq.~\eqref{eq:stardist}, this implies that
\begin{equation}\label{eq:d-star-tr-eq}
	\begin{aligned}
		T_\star(\psi \to \chi) &= \max_{n \in [d]}(\N{\p}_{(n)} - \N{\vec{r}}_{(n)}) \\
		&= \max_{n \in [d]}(\N{\p}_{(n)} - \N{\q}_{(n)} + \N{\q}_{(n)} - \N{\vec{r}}_{(n)}) \\
		&\leq \max_{n \in [d]}(\N{\p}_{(n)} - \N{\q}_{(n)}) + \max_{n \in [d]}(\N{\q}_{(n)} - \N{\vec{r}}_{(n)})  \\
		&= T_\star(\psi \to \varphi) + T_\star(\varphi \to \chi).
	\end{aligned}
\end{equation}

\emph{Remark.} The star conversion distance $T_\star(\psi \to \varphi)$ was so far only defined for pure bipartite states belonging to the same Hilbert space. This restriction is easily lifted by noting that one can always add separable auxiliary states such that the Hilbert spaces (or dimensions) match. This can be done in multiple ways: Let $\psi \in \pure(AB)$, $\varphi \in \pure(A'B')$, and let $d = \A{A} = \A{B}$, $d' = \A{A'} = \A{B'}$. Then
\begin{equation}
	T_\star(\psi_{AB} \to \varphi_{A'B'}) \equiv T_\star(\psi_{AB} \otimes \KBra{11}{11}_{A'B'} \to \KBra{11}{11}_{AB}\otimes\varphi_{A'B'}  ). 
\end{equation}
Since in fact we are only interested in the dimension of systems and their spatial separation, a more compact equivalent notation that we will use later can be defined as follows:
Denote with $m$, $m' >0$ the smallest integers such that $md = m'd'$, and with 
$A_m$, $B_m$, $A_{m'}$, $B_{m'}$ Hilbert spaces of corresponding dimensions. The conversion distance $T_\star(\psi_{AB} \to \varphi_{A'B'})$ is then
\begin{equation}\label{eq:stardist1}
	T_\star(\psi_{AB} \to \varphi_{A'B'}) \equiv T_\star(\psi_{AB} \otimes \KBra{11}{11}_{A_mB_m} \to \varphi_{A'B'} \otimes \KBra{11}{11}_{A_{m'}B_{m'}}) .
\end{equation}

\section{Universal Embezzling Families}\label{sec:emb-fam}
In the main text, we provided the following definition of universal embezzling families, which we repeat here for readability.

\begin{definition}\label{def:uef-app}
	A family of pure bipartite states $\Set{\chi_n}_{n\in \mbN}$ is called a \emph{universal embezzling family} if $\lim_{n \to \infty} T(\chi_n \to \chi_n \otimes \sigma)  = 0$ for every bipartite finite dimensional state $\sigma$. 
\end{definition}

Thanks to the triangle inequality for the star conversion distance proven in Eq.~\eqref{eq:d-star-tr-eq}, a family of states $\Set{\chi_n}_{n \in \mbN}$ is an embezzling family if and only if it can embezzle the state $\Phi_2$, as already shown in Ref~\cite[Lemma~2]{LW14} for the case of embezzling with LO.
\begin{lemma-app}[{cf. Ref.~\cite[Lemma~2 (LO)]{LW14}}]\label{lem:sufficient-embezzling}
	Let  $\Set{\chi_n}_{n \in \mbN}$ be a family of pure bipartite states. The following three statements are equivalent
	\begin{enumerate}
		\item $\Set{\chi_n}_{n \in \mbN}$ is a universal embezzling family.
		\item $\lim_{n \to \infty}T_\star(\chi_n \to \chi_n \otimes \Phi_2) = 0$.
		\item $\lim_{n \to \infty}T_\star(\chi_n \to \chi_n \otimes \Phi_m) = 0$ for every $m\ge2$.
	\end{enumerate}
\end{lemma-app}
\begin{proof}
	Clearly 2. and 3. follow from 1. due to the definition of universal embezzling families and the topological equivalence of $T$ and $T_\star$. 
	Moreover, 3. follows from 2. because
	$\Phi_2^{\otimes \ceil{\log_2 m}} \xrightarrow\locc \Phi_m$, 
	and therefore
	\begin{equation}
		\begin{aligned}
			T_\star&(\chi_n \to \chi_n \otimes \Phi_m) \\
			&\leq T_\star(\chi_n \to \chi_n \otimes \Phi_2^{\otimes \ceil{\log_2 m}})\\
			&\leq T_\star(\chi_n \to \chi_n \otimes \Phi_2^{\otimes \ceil{\log_2 m} -1}) + T_\star(\chi_n \otimes \Phi_2^{\otimes \ceil{\log_2 m} -1} \to \chi_n \otimes \Phi_2^{\otimes \ceil{\log_2 m}})\\
			&\leq T_\star(\chi_n \to \chi_n \otimes \Phi_2) +T_\star(\chi_n \otimes \Phi_2 \to \chi_n \otimes \Phi_2^{\otimes 2})+\dots \\
			&\qquad+ T_\star(\chi_n \otimes \Phi_2^{\otimes \ceil{\log_2 m} -1} \to \chi_n \otimes \Phi_2^{\otimes \ceil{\log_2 m}})\\
			&\leq \ceil{\log_2 m} T_\star(\chi_n \to \chi_n \otimes \Phi_2).
		\end{aligned}
	\end{equation}
	By taking the limit $n \to \infty$ on both sides we obtain the desired result. To conclude, we note that 3. implies 1., since for all $\sigma^{AB}$, $\Phi_{|A|}\xrightarrow\locc\sigma^{AB}$.
\end{proof}

Next, we derive a formula for the conversion distance $T_\star(\chi \to \chi \otimes \Phi_m)$ using Eq.~\eqref{eq:stardist} and Eq.~\eqref{eq:stardist1}. 

\begin{lemma-app}
	Let $\p \in \prob^\downarrow(d)$ be the Schmidt coefficients of $\chi$. Then
	\begin{equation}\label{eq:closed-star-conv-dist}
		T_\star(\chi \to \chi \otimes \Phi_m) = \max_{k \in [\SR(\chi)]}\left\{\N{\p}_{(k)} - \N{\p}_{(a_k)} - \frac{b_k}{m} p_{ a_k+1}\right\},
	\end{equation}
	where $a_k=\floor{k/m}$, $\floor{\cdot}$ denotes the floor, and $b_k=k-a_km$.
\end{lemma-app}
\begin{proof}
	The Schmidt coefficients of the input and target state are
	\begin{equation}
		\begin{aligned}
			&\p \otimes \e_1 = (p_1, \dots, p_{d}, \underbrace{0, \dots, 0}_{d \cdot (m -1) \, \text{times}}),\\
			&\p \otimes \vec{u}^{(m)} = \frac{1}{m}(\underbrace{p_{1}, \dots, p_{1}}_{m\, \text{times}},\dots, \underbrace{p_{d},\dots, p_{d}}_{m\, \text{times}}),
		\end{aligned}
	\end{equation}
	where $\vec{u}^{(m)}=(1/m,\dots,1/m)$.
	It is straightforward to see that
	\begin{equation}
		\begin{aligned}
			&\N{\p \otimes \e_1}_{(k)} = \begin{cases}
				\N{\p}_{(k)} & \text{if $k \in [d]$},\\
				1 & \text{if $d < k \leq d \cdot m$},
			\end{cases}
		\end{aligned}
	\end{equation}
	and by writing $k=a_km+b_k$, where $a_k=\floor{k/m}$,
	\begin{equation}
		\N{\p \otimes \vec{u}^{(m)}}_{(k)} =\sum_{x = 1}^{a_k} p_{x} + b_k        \frac{p_{a_k+1}}{m} =\N{\p}_{(a_k)} +  b_k\frac{p_{a_k+1}}{m}.
	\end{equation}
	Using the closed formula for the star conversion distance given in Theorem~\ref{th:closeddist-app}, we obtain
	\begin{equation}
		T_\star(\chi \to \chi \otimes \Phi_m) = \max_{k \in [\SR(\chi)]}\left\{\N{\p}_{(k)} - \N{\p}_{(a_k)} - \frac{b_k}{m} p^{(n)}_{ a_k+1}\right\}.
	\end{equation}
\end{proof}

A simplified version of this expression can be used to characterize embezzling families.
\begin{theorem}
	A family of pure bipartite states $\Set{\chi_n}_{n\in \mbN}$ with corresponding Schmidt coefficients $\Set{\p^{(n)}}_{n \in\mbN}$ is a universal embezzling family if and only if        
	\begin{equation}
		\lim_{n \to \infty} \max_{l \in [A_n]}\Set{\N{\p^{(n)}}_{{(2l -1)}} - \N{\p^{(n)}}_{{(l-1)}} } = 0,
	\end{equation}
	where $A_n = \ceil{\SR(\chi_n)/2}$.
\end{theorem}

\begin{proof}
	Due to Lemma~\ref{lem:sufficient-embezzling} and Eq.~\eqref{eq:closed-star-conv-dist}, $\Set{\chi_n}_{n\in \mbN}$ is a universal embezzling family if and only if
	\begin{equation}
		0=\lim_{n\to \infty} T_\star(\chi_n \to \chi_n \otimes \Phi_2) = \lim_{n\to \infty} \max_{k \in [\SR(\chi_n)]}\left\{\N{\p^{(n)}}_{(k)} - \N{\p^{(n)}}_{(a_k)} - \frac{b_k}{2} p^{(n)}_{ a_k+1}\right\},
	\end{equation}
	where $a_k=\floor{k/2}$ and $k=2a_k+b_k$. 
	First, we prove the necessary condition. Let $\Set{\chi_n}_{n \in \mbN}$ be a universal embezzling family. We observe that
	\begin{equation}
		\begin{aligned}
			\max_{k \in [\SR(\chi_n)]}\left\{\N{\p^{(n)}}_{(k)} - \N{\p^{(n)}}_{(a_k)} - \frac{b_k}{2} p^{(n)}_{ a_k+1}\right\} &\geq 
			\N{\p^{(n)}}_{(1)} - \N{\p^{(n)}}_{(a_1)} - \frac{b_1}{2} p^{(n)}_{ a_1+1}\\
			& = p_1^{(n)}-0-\frac{1}{2}p_1^{(n)} > 0.
		\end{aligned}
	\end{equation}
	Taking the limit for $n \to \infty$ in the expression above, we obtain $\lim_{n \to \infty} p_{ 1}^{(n)} = 0$.  Since $p_{ 1}^{(n)}$ is the largest Schmidt coefficient,  $p_{a_k +1}^{(n)}$ converges to zero too. Since $0\le\frac{b_k}{2}\le\frac{1}{2}$, 
	\begin{equation}
		\lim_{n\to\infty}  \frac{b_k}{2} p^{(n)}_{ a_k+1}=0
	\end{equation}
	and thus
	\begin{equation}
		\begin{aligned}
			0&=\lim_{n\to\infty}\max_{k \in [\SR(\chi_n)]}\left\{\N{\p^{(n)}}_{(k)} - \N{\p^{(n)}}_{(a_k)} - \frac{b_k}{2} p^{(n)}_{ a_k+1}\right\} \\
			&=\lim_{n\to\infty}\max_{k \in [\SR(\chi_n)]}\left\{\N{\p^{(n)}}_{(k)} - \N{\p^{(n)}}_{(a_k)} \right\} \\
			&\geq \lim_{n \to \infty}\max_{l \in [\ceil{\SR(\chi_n)/2}]}\Set{\N{\p^{(n)}}_{(2l-1)} - \N{\p^{(n)}}_{(l-1)}} \geq 0 .
		\end{aligned}
	\end{equation}
	
	For the sufficient condition, we observe that 
	\begin{equation}
		0 = \lim_{n \to \infty} \max_{l \in [A_n]}\Set{\N{\p^{(n)}}_{{(2l -1)}} - \N{\p^{(n)}}_{{(l-1)}} }  \geq \lim_{n \to \infty} p_1^{(n)} ,
	\end{equation}
	which implies that $\lim_{n \to \infty} p_1^{(n)} =0$. Furthermore,
	\begin{equation}
		\begin{aligned}
			\lim_{n \to \infty}T_\star(\chi_n \to \chi_n \otimes \Phi_2)  &= \lim_{n \to \infty} \max_{k \in [\SR(\chi_n)]}\left\{\N{\p^{(n)}}_{(k)} - \N{\p^{(n)}}_{(a_k)} - \frac{b_k}{2} p^{(n)}_{a_k+1}\right\} \\
			&\leq \lim_{n \to \infty}\max_{k \in [\SR(\chi_n)]}\left\{\N{\p^{(n)}}_{(k)} - \N{\p^{(n)}}_{(a_k)}\right\} 
		\end{aligned}
	\end{equation}
    At this point, we have a closer look at 
    \begin{align}
        \max_{k \in [\SR(\chi_n)]}\left\{\N{\p^{(n)}}_{(k)} - \N{\p^{(n)}}_{(a_k)}\right\}.
    \end{align}
    If $k$ is even, then $a_k=k/2=a_{k+1}$, and thus
    \begin{equation}
        \N{\p^{(n)}}_{(k)} - \N{\p^{(n)}}_{(a_k)}\le\N{\p^{(n)}}_{(k+1)} - \N{\p^{(n)}}_{(a_{k+1})}
    \end{equation}
    If $\SR(\chi_n)$ is odd, we can thus without loss of generality restrict the maximization to run over odd $k\in[\SR(\chi_n)]$. If $\SR(\chi_n)$ is even, we must additionally consider $k=\SR(\chi_n)$. Assume that this is the case: It then holds that
    \begin{equation}
		\begin{aligned}
			&\A{\N{\p^{(n)}}_{(\SR(\chi_n))} - \N{\p^{(n)}}_{\left(\SR(\chi_n)/2\right)} - \left(\N{\p^{(n)}}_{(\SR(\chi_n)-1)} - \N{\p^{(n)}}_{\left(\frac{\SR(\chi_n)}{2}-1\right)}\right)} \\
			&\qquad = \A{p^{(n)}_{\SR(\chi_n)} - p^{(n)}_{\SR(\chi_n)/2}} \\
			& \qquad \leq  p^{(n)}_1,
		\end{aligned} 
	\end{equation}
    which vanishes in the limit $n \to \infty$. As a consequence,
	\begin{equation}\label{eq:reduced-conv-dist-2}
		\begin{aligned}
			\lim_{n \to \infty}T_\star(\chi_n \to \chi_n \otimes \Phi_2) &\leq \lim_{n \to \infty}\max_{k \in [\SR(\chi_n)]}\left\{\N{\p^{(n)}}_{(k)} - \N{\p^{(n)}}_{(a_k)}\right\} \\
            &=  \lim_{n \to \infty}\max_{k \in [\SR(\chi_n)], k \text{ odd}}\left\{\N{\p^{(n)}}_{(k)} - \N{\p^{(n)}}_{(a_k)}\right\} \\
			&= \lim_{n \to \infty} \max_{l \in \ceil{\SR(\chi_n)/2}}\Set{\N{\p^{(n)}}_{(2l -1)} - \N{\p^{(n)}}_{(l-1)}} =0.
		\end{aligned}
	\end{equation}
    
	Eq.~\eqref{eq:reduced-conv-dist-2} shows that $\Set{\chi_n}_{n \in \mbN}$ is an embezzling family and concludes the proof.
\end{proof}

An important and easy to check necessary condition for a universal embezzling family is given in the following Corollary.
\begin{corollary}[{cf. Ref.~\cite[Lemma~3 (LO)]{LW14}}]\label{co:emb-nec-cond}
	If a family of pure bipartite states $\Set{\chi_n}_{n \in \mbN}$ is a universal embezzling family, then $\lim_{n \to \infty} p_1^{(n)} =0$, where $\p^{(n)} \in \prob^\downarrow (d_n)$ are the Schmidt coefficients of $\chi_n$.
\end{corollary}

\section{Regular Embezzling Families of States}\label{sec:reg-fam-states}
Often, families of states are defined in terms of a positive, monotonic, and continuous function. This motivates the following definition (compare to Ref.~\cite{LW14}).
\begin{definition-app}\label{def:reg-fam}
	A family of states $\Set{\chi_n}_{n \in \mbN}$ is a regular family if there exists a monotonic function $f\colon \mbN \to (0, \infty)$ such that 
	\begin{equation}
		\ket{\chi_n} = \frac{1}{\sqrt{F_n}}\sum_{x = 1}^{n}\sqrt{f(x)} \Ket{xx} \quad \text{for all} \quad n \in \mbN,
	\end{equation}
	where $F_n = \sum_{x=1}^{n} f(x)$.
\end{definition-app}
\emph{Remark.} If $\Set{\chi_n}_{n \in \mbN}$ is a regular family, then there exists a sequence $\Set{n_j}_{j \in \mbN}$ such that the family $\Set{\chi_{n_j}}_{j \in \mbN}$ is a universal embezzling family if and only if $\liminf_{n \to \infty} T_{\star}(\chi_n \to \chi_n \otimes \Phi_m) = 0$. However, according to our definition, the family $\Set{\chi_{n_j}}_{j \in \mbN}$ is then no longer regular, unless $n_j = j$ for all $j \in \mbN$.

For the following proofs, it is beneficial to extend $f$ to a monotonic continuous function on  $[1,\infty)$. This can often be done trivially by simply extending the domain of $f$. An example is the van Dam and Hayden family where $f(x) = x^{-1}$. Otherwise, we can extend $f$ by connecting two consecutive points with straight lines.
If the function $f$ is multiplied by a constant factor, this does not change the corresponding family of states. Therefore, from now on we assume for simplicity that $f(1) =1$.

Let $\Set{\chi_n}_{n \in \mbN}$ be a regular family of states and let $f$ be the function associated to it. We define the non-increasing functions $f_n$, $n \in \mbN$, by
\begin{equation}\label{eq:mon-f}
	f_n(x) = \begin{cases}
		f(x)& \text{if $f$ is non-increasing,}\\
		\frac{f(n+1 -x)}{f(n)} &\text{if $f$ is non-decreasing.}
	\end{cases}
\end{equation}
By our extension of $f$,  $f_n$ is also naturally extended to a continuous function $g(x, y)$ such that $g(x,n) = f_n(x)$ via the definition
\begin{equation} \label{eq:g-fun}
	g(x, y) = \begin{cases}
		f(x) & \text{if $f$ is non-increasing,}\\
		\frac{f(y+1 -x)}{f(y)} &\text{if $f$ is non-decreasing.}
	\end{cases}
\end{equation} 
The (by definition non-increasing) Schmidt coefficients of $\chi_n$ are therefore given by
\begin{equation}\label{eq:p-vect}
	p^{(n)}_{ x} = \frac{f_n(x)}{F_n}.
\end{equation}
Since $p_1^{(n)} = \frac{1}{F_n}$, we can restate Corollary~\ref{co:emb-nec-cond} for regular families.
\begin{corollary-app}[{cf. Ref.~\cite[Lemma~3 (LO)]{LW14}}] \label{co:Fn-div}
	If $\Set{\chi_n}_{n \in \mbN}$ is a regular universal embezzling family, then $\lim_{n \to \infty} F_n =+\infty$, where $F_n$ is defined in  Definition~\ref{def:reg-fam}.
\end{corollary-app}

In the following proposition, we present bounds on the limit of the conversion distance $T_\star(\chi_n \to \chi_n \otimes \Phi_m)$ in terms of the function $g(x, y)$ introduced in Eq.~\eqref{eq:g-fun}, which will be of use later.
\begin{proposition-app}\label{prop:sum-to-int}
	Let $\Set{\chi_n}_{n \in \mbN}$ be a regular family of states, $f$ be the function associated to it, and $g$ be defined as in Eq.~\eqref{eq:g-fun}. If $\lim_{n \to \infty}F_n = \infty$, then
	\begin{align}\label{eq:sum-to-int}
			&\liminf_{n \to \infty}T_\star(\chi_n \to \chi_n \otimes \Phi_m) \geq \liminf_{y \to \infty}\frac{\max_{1 \leq a\leq y/m}\Set{\int_{a}^{am} g(x, y)\diff x}}{\int_{1}^{y} g(x, y) \diff x}, \\
			\label{eq:sum-to-int-sup}
			&\limsup_{n \to \infty}T_\star(\chi_n \to \chi_n \otimes \Phi_m) \leq \limsup_{y \to \infty}\frac{\max_{1 \leq a\leq y/m}\Set{\int_{a}^{am} g(x, y)\diff x}}{\int_{1}^{y} g(x, y) \diff x}.
	\end{align}
	If $ \lim_{y \to \infty}\frac{\max_{1 \leq a\leq y/m}\Set{\int_{a}^{am} g(x, y)\diff x}}{\int_{1}^{y} g(x, y) \diff x}$ exists, this implies that
	\begin{equation}\label{eq:conv-int}
		\lim_{n \to \infty}T_\star(\chi_n \to \chi_n \otimes \Phi_m) = \lim_{y \to \infty}\frac{\max_{1 \leq a\leq y/m}\Set{\int_{a}^{am} g(x, y)\diff x}}{\int_{1}^{y} g(x, y) \diff x}.
	\end{equation}
\end{proposition-app}
\begin{proof}
	For better readability, we divide the proof into steps and use the notation $G(x, y) = \int_{1}^{x} g(t, y) \diff t$.
	
	\emph{Step 1:}
	Starting from Eq.~\eqref{eq:closed-star-conv-dist}, we find
	\begin{equation}
		\begin{aligned}
			\liminf_{n \to \infty}\,&T_\star(\chi_n \to \chi_n \otimes \Phi_m)\\
			&= \liminf_{n \to \infty} \max_{k \in [n]}\left\{\N{\p^{(n)}}_{(k)} - \N{\p^{(n)}}_{(a_k)} - \frac{b_k}{m} p^{(n)}_{ a_k+1}\right\} \\
			=&\liminf_{n \to \infty} \max_{k \in [n]}\left\{\N{\p^{(n)}}_{(a_km)} + \sum_{x=a_k m+1}^{k} p_{x}^{(n)} - \N{\p^{(n)}}_{(a_k)} - \frac{b_k}{m} p^{(n)}_{ a_k+1}\right\} \\
			=&\liminf_{n \to \infty} \max_{k \in [n]}\left\{\N{\p^{(n)}}_{(a_km)}  - \N{\p^{(n)}}_{(a_k)} + \frac{\sum_{x=a_km+1}^{a_km+b_k} f_n(x)-\frac{b_k}{m} f_n(a_k+1)}{F_n} \right\} 
		\end{aligned}
	\end{equation}
	Now we observe that the last contribution in the expression above vanishes because by construction $f_n(x)\le1$ and
	\begin{equation}
		\begin{aligned}
			&0\le \frac{\sum_{x=a_km+1}^{a_km+b_k} f_n(x)}{F_n} \le \frac{m}{F_n}, \\
			&0\le\frac{\frac{b_k}{m} f_n(a_k+1)}{F_n}\le\frac{1}{F_n}.
		\end{aligned}
	\end{equation}
	Moreover, by assumption, the right-hand sides converge to zero in the limit $n \to \infty$, and this ensures that the limit inferior is additive. As a result,
	\begin{equation}
		\begin{aligned}\label{eq:step1}
				\liminf_{n \to \infty} T_\star(\chi_n \to \chi_n \otimes \Phi_m) &= \liminf_{n \to \infty} \max_{a \in  \Set{0, \dots, \floor{n/m}}} \Set{\N{\p^{(n)}}_{(am)} - \N{\p^{(n)}}_{(a)}} \\
				&=\liminf_{n \to \infty} \max_{a \in [\floor{n/m}]} \Set{\N{\p^{(n)}}_{(am)} - \N{\p^{(n)}}_{(a)}},
		\end{aligned}
	\end{equation}
	where we used that $\N{\p^{(n)}}_{(0)} = 0$.
	
	\emph{Step 2:} 
	We start with finding bounds for $\sum_{x = 1}^n f_n(x) - \int_1^{n} f_n(x) \diff x$. To this end we observe that since $f_n(x)$ is by construction non-increasing,
	\begin{equation}\label{eq:step2}
		\int_1^{n} f_n(x) \diff x\le\int_1^{n} f_n(x) \diff x + f_n(n) \leq \sum_{x = 1}^n f_n(x) \leq f_n(1) + \int_1^{n} f_n(x)\diff x.
	\end{equation}
	By subtracting $\int_1^{n} f_n(x) \diff x$ from all terms, the desired bounds follow:
	\begin{equation}
		0 \leq \sum_{x = 1}^n f_n(x) - \int_1^{n} f_n(x) \diff x \leq f_n(1)=1.
	\end{equation}
	Dividing by $F_n=\sum_{x = 1}^n f_n(x)$ and taking the limit $n \to \infty$, we obtain $\lim_{n \to \infty} 1 - \frac{\int_{1}^{n}f_n(x) \diff x}{\sum_{x = 1}^n f_n(x)} = 0$. Since $f_n(x) = g(x,n)$, the limit can be rewritten as $\lim_{n \to \infty} \frac{G(n, n)}{F_n} = 1$. 
	
	\emph{Step 3:} 
	In this step we want to show that
	\begin{equation}
		\begin{aligned}
			\liminf_{n \to \infty} &\max_{a \in [\floor{n/m}]} \Set{\N{\p^{(n)}}_{(am)} - \N{\p^{(n)}}_{(a)}}\\
			&=\liminf_{n \to \infty} \frac{\max_{a \in [\floor{n/m}]}\Set{\sum_{x = a+1}^{am} f_n(x)}}{F_n} \\
			&=\liminf_{n \to \infty}\frac{\max_{a \in [\floor{n/m}]}\Set{\int_{a +1}^{am} f_n(x) \diff x}}{\int_1^n f_n(x)\diff x}.
		\end{aligned}
	\end{equation}
	Analogously to the previous step, one can obtain the following bounds for $\frac{\sum_{x = a+1}^{am} f_n(x)}{F_n}$:
	\begin{equation}
		\frac{\int_{a +1}^{am} f_n(x) \diff x}{F_n} \leq \frac{\sum_{x = a+1}^{am} f_n(x)}{F_n} \leq \frac{1 + \int_{a +1}^{am} f_n(x) \diff x}{F_n},
	\end{equation}
	and after taking the maximum over $a \in [\floor{n/m}]$ and the limit inferior the bounds become
	\begin{equation}\label{eq:step2.1}
		\begin{aligned}
			\liminf_{n \to \infty} &\frac{\max_{a \in [\floor{n/m}]}\Set{\int_{a +1}^{am} f_n(x) \diff x}}{F_n} \\
			&\leq \liminf_{n \to \infty}\frac{\max_{a \in [\floor{n/m}]}\Set{\sum_{x = a+1}^{am} f_n(x)}}{F_n}\\
			&\leq \liminf_{n \to \infty}\frac{1+\max_{a \in [\floor{n/m}]}\Set{\int_{a +1}^{am} f_n(x) \diff x}}{F_n}.
		\end{aligned}
	\end{equation}
	We can rewrite the last expression as
	\begin{equation}\label{eq:step2.2}
		\begin{aligned}
			\liminf_{n \to \infty}&\left(\frac{1}{F_n}+\frac{\max_{a \in [\floor{n/m}]}\Set{\int_{a +1}^{am} f_n(x) \diff x}}{F_n}\right) \\
			&=\liminf_{n \to \infty}\left(\frac{1}{F_n}\right) +\liminf_{n \to \infty}\left(\frac{\max_{a \in [\floor{n/m}]}\Set{\int_{a +1}^{am} f_n(x) \diff x}}{F_n}\right) \\
			&=\liminf_{n \to \infty}\frac{\max_{a \in [\floor{n/m}]}\Set{\int_{a +1}^{am} f_n(x) \diff x}}{F_n}.
		\end{aligned}
	\end{equation}
	By combining Eq.~\eqref{eq:step2.1} with Eq.~\eqref{eq:step2.2}, one obtains
	\begin{equation}
		\liminf_{n \to \infty} \frac{\max_{a \in [\floor{n/m}]}\Set{\sum_{x = a+1}^{am} f_n(x)}}{F_n}
		=\liminf_{n \to \infty}\frac{\max_{a \in [\floor{n/m}]}\Set{\int_{a +1}^{am} f_n(x) \diff x}}{F_n} .
	\end{equation}
	This is equivalent to
	\begin{equation}\label{eq:step3}
		\begin{aligned}
			\liminf_{n \to \infty} &\max_{a \in [\floor{n/m}]} \Set{\N{\p^{(n)}}_{(am)} - \N{\p^{(n)}}_{(a)}}\\
			&= \liminf_{n \to \infty}\frac{\max_{a \in [\floor{n/m}]}\Set{ G(am, n) - G(a+1, n)}}{G(n, n)},
		\end{aligned}
	\end{equation}
	where we replaced $F_n$ with $G(n,n)$, which we can do according to step 2.
	
	\emph{Step 4:}
	Let $c_n$ be the value that maximizes $\max_{1 \leq a\leq n/m}\Set{ G(am, n) - G(a+1, n)}$ and let $c_n'$ be the natural number satisfying $c_n'\leq c_n < c_n' +1$. This implies
	\begin{equation}
		\begin{aligned}
			G(c_nm, n) - G(c_n+1, n) &= \int_{c_n +1}^{c_nm} g(t, n) \diff t \\
			&\leq \int_{c_n' + 1}^{c_n'm}  g(t, n) \diff t + \int_{c_n'm}^{(c_n' +1)m}  g(t, n) \diff t \\
			&\leq \int_{c_n' + 1}^{c_n'm}  g(t, n)\diff t + m.
		\end{aligned}
	\end{equation}
	Dividing by $G(n, n)$, taking the limit inferior, and arguing as in the previous step, we obtain
	\begin{equation}
		\begin{aligned}
			\liminf_{n \to \infty}&\frac{\max_{1 \leq a\leq n/m}\Set{ G(am, n) - G(a+1, n)}}{G(n,n)} \\
			&= \liminf_{n \to \infty} \frac{G(c_nm, n) - G(c_n+1, n)}{G(n,n)} \\
			&\leq \liminf_{n \to \infty} \frac{G(c_n'm , n) - G(c_n' +1, n)+m }{G(n,n)}\\
			&= \liminf_{n \to \infty} \frac{G(c_n'm , n) - G(c_n' +1, n)}{G(n,n)}\\
			&\leq \liminf_{n \to \infty}\frac{\max_{a \in [\floor{n/m}]}\Set{ G(am, n) - G(a+1, n)}}{G(n,n)}.
		\end{aligned}
	\end{equation}
	The reverse inequality follows from the fact that we maximize over a larger set. This proves that
	\begin{equation}
		\label{eq:step4}
		\begin{aligned}
			\liminf_{n \to \infty}&\frac{\max_{a \in [\floor{n/m}]}\Set{ G(am, n) - G(a+1, n)}}{G(n, n)}\\
			&= \liminf_{n \to \infty}\frac{\max_{1 \leq a\leq n/m}\Set{ G(am, n) - G(a+1, n)}}{G(n, n)}.
		\end{aligned}
	\end{equation} 
	
	\emph{Step 5:} Combining the results of the previous steps, and in particular Eq.~\eqref{eq:step1}, Eq.~\eqref{eq:step3}, and Eq.~\eqref{eq:step4}, we have proven that
	\begin{equation}
		\liminf_{n \to \infty}T_\star(\chi_n \to \chi_n \otimes \Phi_m) =  \liminf_{n \to \infty}\frac{\max_{1 \leq a\leq n/m}\Set{\int_{a+1}^{am} g(x, n)\diff x}}{\int_{1}^{n} g(x, n) \diff x}.
	\end{equation}        
	In the above equation, we can replace $\int_{a+1}^{am} g(x,y) \diff x$ with $\int_{a}^{am}g(x,y)\diff x$ because the difference of the two integrals is finite and  divided by a diverging term. Thus
	\begin{equation}\label{eq:fist-step-5}
		\liminf_{n \to \infty}T_\star(\chi_n \to \chi_n \otimes \Phi_m) = \liminf_{n \to \infty}\frac{\max_{1 \leq a\leq n/m}\Set{\int_{a}^{am} g(x, y)\diff x}}{\int_{1}^{n} g(x, y) \diff x}.
	\end{equation}
	
	The expression inside the $\liminf$ on the right-hand side of Eq.~\eqref{eq:fist-step-5} is a function of $n \in \mbN\setminus \Set{ 1}$ and we call it $M(n)$,
	\begin{equation}
		M(n) = \frac{\max_{1 \leq a\leq n/m}\Set{\int_{a}^{am} g(x, n)\diff x}}{\int_{1}^{n} g(x, n) \diff x} .
	\end{equation}
	We extend this function to real numbers $y > 1$ as follows:
	\begin{equation}
		M(y) = \frac{\max_{1 \leq a\leq y/m}\Set{\int_{a}^{am} g(x, y)\diff x}}{\int_{1}^{y} g(x, y) \diff x} .
	\end{equation}        
	Since the natural numbers are a subset of the real numbers, 
	\begin{equation}\label{eq:lim-inf-conv-dist-ineq}
		\liminf_{n \to \infty} M(n) \geq \liminf_{y \to \infty} M(y) ,
	\end{equation}
	which proves Eq.~\eqref{eq:sum-to-int}. 
	Furthermore, Steps 1-4 can be repeated with exactly the same arguments for the $\limsup$, and they imply
	\begin{equation}\label{eq:lim-sup-conv-dist-ineq}
		\limsup_{n \to \infty}T_\star(\chi_n \to \chi_n \otimes \Phi_m) =  \limsup_{n \to \infty}M(n) \le \limsup_{y \to \infty} M(y) .
	\end{equation}
	This proves Eq.~\eqref{eq:sum-to-int-sup}.
	
	In the following, we assume that $\lim_{y \to \infty}M(y)$ exists. This implies
	\begin{equation} \label{eq:lim-equals}
		\lim_{y \to \infty} M(y) = \liminf_{y \to \infty} M(y) = \limsup_{y \to \infty} M(y) .
	\end{equation}
	From Eq.~\eqref{eq:lim-inf-conv-dist-ineq}, Eq.~\eqref{eq:lim-sup-conv-dist-ineq}, and Eq.~\eqref{eq:lim-equals} we obtain
	\begin{equation}\label{eq:lim-inf-conv-dist}
		\lim_{y \to \infty} M(y) \leq \liminf_{n \to \infty} M(n) \leq \limsup_{n \to \infty} M(n) \leq \lim_{y \to \infty} M(y).
	\end{equation}
	Therefore, $\lim_{n \to \infty} M(n)$ exists and is equal to $\lim_{y \to \infty} M(y)$. From Eq.~\eqref{eq:fist-step-5} and Eq.~\eqref{eq:lim-sup-conv-dist-ineq} we derive
	\begin{equation}
		\lim_{n \to \infty}T_\star(\chi_n \to \chi_n \otimes \Phi_m) = \lim_{n \to \infty} M(n) = \lim_{y \to \infty}M(y) .            
	\end{equation}
	This proves Eq.~\eqref{eq:conv-int} and concludes the proof.

\end{proof}

Corollary~\ref{co:Fn-div} combined with Proposition~\ref{prop:sum-to-int} leads to the following characterization of embezzling families.
\begin{corollary-app}
	Let $\Set{\chi_n}_{n \in \mbN}$ be a regular family of states, $f$ be the function associated to it, and $g$ be defined as in Eq.~\eqref{eq:g-fun}. If 
	\begin{equation}
		\lim_{y \to \infty}\frac{\max_{1 \leq a\leq y/m}\Set{\int_{a}^{am} g(x, y)\diff x}}{\int_{1}^{y} g(x, y) \diff x}
	\end{equation} exists, then the family of states $\Set{\chi_n}_{n \in \mbN}$ is an universal embezzling family if and only if $\lim_{n \to \infty }F_n= \infty$ and 
	\begin{equation}
		\lim_{y \to \infty}\frac{\max_{1 \leq a\leq y/m}\Set{\int_{a}^{am} g(x, y)\diff x}}{\int_{1}^{y} g(x, y) \diff x} = 0.
	\end{equation}
\end{corollary-app}

In addition, thanks to Proposition~\ref{prop:sum-to-int}, the task of determining if a family of states is a universal embezzling family is converted into an optimization problem. It is enough to find the maximum of the differentiable function $G_y(a) = \int_{a}^{am} g(x, y)\diff x$ on $1\le a\le y/m$, which is easily done with the help of its derivative 
\begin{equation}\label{eq:diff}
	m g(am, y) - g(a, y)  , \quad 1 \leq a \leq y/m.
\end{equation}

\section{Generalization of the van Dam and Hayden Family}\label{sec:gen-van-dam}
The universal embezzling family $\Set{\chi_n}_{n \in \mbN}$ introduced by van Dam and Hayden  consists of the states
\begin{equation}
	\Ket{\chi_n} = \frac{1}{\sqrt{H_n}}\sum_{x=1}^{n}\sqrt{x^{-1} }\Ket{xx},
\end{equation}
where $H_n = \sum_{x=1}^{n}x^{-1}$ is the $n$-th harmonic number. We generalize this family of states as follows: For every $\alpha \in \mbR$ we introduce the family $\Set{\chi^{(\alpha)}_n}_{n \in \mbN}$, where
\begin{equation}
	\Ket{\chi^{(\alpha)}_n} = \frac{1}{\sqrt{H^{(-\alpha)}_n }}\sum_{x=1}^{n}\sqrt{x^{\alpha}} \Ket{xx}
\end{equation}
and $H^{(-\alpha)}_n = \sum_{x=1}^{n}x^{\alpha}$ is the $n$-th generalized harmonic number. Note that the families of states $\Set{\chi_n^{(\alpha)}}_{n \in \mbN}$ are regular families of states, and the van Dam and Hayden family is recovered for $\alpha = -1$. The corresponding Schmidt coefficients, as per our convention arranged in non-increasing order, are 
\begin{equation}
	\p^{(n|\alpha)} = \begin{cases}
		\frac{1}{H^{(-\alpha)}_n } (1, 2^\alpha,\dots, n^\alpha) & \text{if $\alpha \leq 0$},\\
		\frac{1}{H^{(-\alpha)}_n } (n^\alpha, \dots,2^\alpha, 1) & \text{if $\alpha > 0$}.
	\end{cases}
\end{equation}
In the remaining part of this section, we show that the family of state $\Set{\chi^{(\alpha)}_n}_{n \in  \mbN}$ is a universal embezzling family if and only if $\alpha = -1$ and derive bounds on the star conversion distance.
\paragraph*{Case $\alpha < -1$.} For $\alpha < -1$, $\lim_{n\to \infty} H^{(-\alpha)}_n $ is finite, thus the regular family of states $\Set{\chi^{\alpha}_n}_{n \in \mbN}$ is not a universal embezzling family (see Corollary~\ref{co:Fn-div}). The largest Schmidt coefficient provides a lower bound on the star conversion distance: Using Eq.~\eqref{eq:stardist}, we obtain
\begin{equation}
	\begin{aligned}
		T_\star(\chi^{(\alpha)}_n \to \chi^{(\alpha)}_n \otimes \Phi_m) &= \max_{k \in [n]} \left\{ \N{\p^{(n|\alpha)}}_{(k)} - \N{\p^{(n|\alpha)}}_{(a_k)} - \frac{b_k}{m} p^{ ( n | \alpha)}_{a_k+1} \right\} \\
		&\geq p_{1}^{(n|\alpha)} \left(1 - \frac{1}{m}\right) \\
		&= \frac{1}{H^{(-\alpha)}_n }\left(1 - \frac{1}{m}\right),
	\end{aligned}
\end{equation}
where the inequality followed from choosing $k =1$. By taking the limit, we obtain
\begin{equation}
	\liminf_{n \to \infty} T_\star(\chi^{(\alpha)}_n \to \chi^{(\alpha)}_n \otimes \Phi_m)  \geq \liminf_{n \to \infty} \frac{1}{H^{(-\alpha)}_n }\left(1 - \frac{1}{m}\right) = \frac{1}{\zeta(-\alpha)}\left(1 - \frac{1}{m}\right),
\end{equation}
where $\zeta(-\alpha) = \sum_{x = 1}^\infty x^\alpha$ is the Riemann Zeta function. To obtain an upper bound, we  observe that, since $\alpha <-1$,
\begin{equation}
	\sum_{x=a+1}^{(a+1)m} x^\alpha \leq (a+1)^\alpha + \int_{a+1}^{(a+1)m} x^\alpha \diff x = (a+1)^\alpha + \frac{(a+1)^{\alpha +1}}{\alpha +1}\left(m^{\alpha +1} - 1\right).
\end{equation}
The right-hand side is decreasing in $a$, and thus
\begin{equation}
	\begin{aligned}
		\max_{k \in [n]}&\left\{\N{\p^{(n|\alpha)}}_{{(k)}} - \N{\p^{(n|\alpha)}}_{(a_k)} - \frac{b_k}{m} p_{ a_k+1}^{(n|\alpha)}\right\} \\
		&\le \frac{1}{H^{(-\alpha)}_n } \max_{k \in [n]} \Set{\sum_{x=a_k+1}^{k} x^\alpha }\\
		&\le \frac{1}{H^{(-\alpha)}_n }\max_{a \in \{0,...,\floor{n/m}\}}\Set{\sum_{x=a+1}^{(a+1)m} x^\alpha }\\
		&\le \frac{1}{H^{(-\alpha)}_n }\max_{a \in \{0,...,\floor{n/m}\}}\Set{(a+1)^\alpha + \frac{(a+1)^{\alpha +1}}{\alpha +1}\left(m^{\alpha +1} - 1\right)} \\
		&=\frac{1}{H^{(-\alpha)}_n }\left(1 + \frac{m^{\alpha +1} - 1}{\alpha +1}\right).
	\end{aligned}
\end{equation}
Taking the limit $n \to \infty$, and considering that the conversion distance is by definition smaller than one, we obtain
\begin{equation}
	\limsup_{n \to \infty} T_\star(\chi^{(\alpha)}_n \to \chi^{(\alpha)}_n \otimes \Phi_m) \leq \min\Set{1, \frac{1}{\zeta(-\alpha)}\left(1 + \frac{m^{\alpha +1} - 1}{\alpha +1}\right)}.
\end{equation}
\paragraph*{Case $\alpha = -1$.}
This is the van Dam and Hayden family. Analogously to the previous case, we observe that
\begin{equation}
	\sum_{x=a+1}^{(a+1)m} x^{-1} \leq (a+1)^{-1} + \int_{a+1}^{(a+1)m} x^{-1} \diff x =  (a+1)^{-1} + \ln m \leq 1 + \ln m .
\end{equation}
Since $H_n^{(1)} = H_n$ diverges, by following the same steps, we have
\begin{equation}
	\lim_{n \to \infty} T_\star(\chi^{(-1)}_n \to \chi^{(-1)}_n \otimes \Phi_m) \leq \lim_{n \to \infty} \frac{1 + \ln m }{H_n} = 0.
\end{equation}
Thus, the limit for the conversion distance is zero as expected for the van Dam and Hayden family.
\paragraph*{Case $-1 < \alpha < 0$.}
In this scenario $H^{(-\alpha)}_n $ diverges when $n \to \infty$, and we can use the results of Section~\ref{sec:reg-fam-states}. Here $g(x, y) = x^\alpha$, and the derivative with respect to $a$ of the function $\int_{a}^{am} g(x, y)\diff x$ is
\begin{equation}
	m g(am, y) - g(a, y) = a^\alpha (m^{\alpha +1} - 1),
\end{equation}
which is positive in the domain $1 \leq a \leq y/m$. This implies that the function $\int_{a}^{am} g(x, y)\diff x$ is non-decreasing and the maximum 
\begin{equation}
	\max_{1 \leq a\leq y/m}\Set{\int_{a}^{am} g(x, y)\diff x}
\end{equation}
is obtained for $a = y/m$. Due to Proposition~\ref{prop:sum-to-int},
\begin{equation}
	\lim_{n \to \infty}T_\star(\chi^{(\alpha)}_n \to \chi^{(\alpha)}_n \otimes \Phi_m) = \lim_{y \to \infty}\frac{\int_{y/m}^{y} x^\alpha \diff x}{\int_{1}^{y} x^\alpha \diff x} = 1 - \frac{1}{m^{\alpha +1}}.
\end{equation}
\paragraph*{Case $\alpha = 0$.}
Here we have
\begin{equation}
	p_{x}^{(n)} = \frac{1}{n}\, , \quad (\p^{(n)} \otimes \vec{u}^{(m)})_{ x} = \frac{1}{nm}.
\end{equation}	
We can compute the star conversion distance directly using  Eq.~\eqref{eq:stardist} and obtain
\begin{equation}
	T_\star(\chi^{(0)}_n \to \chi^{(0)}_n \otimes \Phi_m) = \max_{k \in [n]} \Set{\frac{k}{n} - \frac{k}{nm}} = 1 - \frac{1}{m}.
\end{equation}
\paragraph*{Case $\alpha > 0$.} Also here $H^{(-\alpha)}_n $ diverges for $n \to \infty$ and we can use the results of Section~\ref{sec:reg-fam-states}. The function $g(x, y)$ is defined as $g(x,y) =(y +1 -x)^\alpha / y^\alpha$, because $x^\alpha$ is increasing. From this follows
\begin{equation}\label{eq:a-max}
	\begin{aligned}
		m g(am, y) - g(a, y) \geq 0 &\Leftrightarrow m(y +1 -am)^\alpha \geq (y +1 -a)^\alpha \\
		&\Leftrightarrow a \leq \frac{(y +1)\left(m^{\frac{1}{\alpha}} -1\right)}{m^{1 + \frac{1}{\alpha}}-1} \coloneqq a_{\max}.
	\end{aligned}
\end{equation}
Since $a_{\max}$ increases linearly with $y$ and $a_{\max}\le \frac{y}{m}$ is equivalent to $y\ge \frac{m\left(m^{1/\alpha}-1\right)}{m-1}$,
for large enough $y$, the value $a_{\max}$ belongs to the interval $[1, y/m]$ and is a global maximum. 
Plugging $a_{\max}$ into Proposition~\ref{prop:sum-to-int}, we obtain
\begin{equation}\label{eq:a-geq-0}
	\lim_{n \to \infty}T_\star(\chi^{(\alpha)}_n \to \chi^{(\alpha)}_n \otimes \Phi_m) = \lim_{y \to \infty}\frac{\int_{a_{\max}}^{a_{\max} m} \frac{(y +1 -x)^\alpha}{y^\alpha} \diff x}{\int_{1}^{y} \frac{(y +1 -x)^\alpha}{y^\alpha} \diff x} = (m -1)\left(\frac{m-1}{m^{1 + \frac{1}{\alpha}} -1}\right)^\alpha.
\end{equation}
This completes the study of regular families of states defined by $f(x)=x^\alpha$. The star conversion distance vanishes only for $\alpha = -1$, thus the only universal embezzling family of this form is the one introduced by van Dam and Hayden. Furthermore, exact values for the limit of the star conversion distance for $\alpha \ge-1$ and lower and upper bounds for $\alpha<-1$ were provided.
\section{Uniqueness of the van Dam and Hayden Embezzling Family}\label{sec:uniq-vdh}
In this section, we provide further results on the uniqueness of the van Dam and Hayden embezzling family. Given a regular family of states $\Set{\chi_n}_{n \in \mbN}$, the asymptotic behavior of the function $f$ associated to it is relevant to determine whether $\Set{\chi_n}_{n \in \mbN}$ is a universal embezzling family or not (e.g., see Corollary~\ref{co:Fn-div}). We thus use the notations little-$\omega$ and little-$o$ (see, e.g., Ref.~\cite{Cor22}) to describe asymptotic relations between two functions $g, h\colon [1, \infty) \to (0, \infty)$,
\begin{equation}
	h \in \omega(g) \Leftrightarrow \lim_{x \to \infty}\frac{h(x)}{g(x)} = + \infty \, , \quad 	h \in o(g) \Leftrightarrow \lim_{x \to \infty}\frac{h(x)}{g(x)} = 0 .
\end{equation}

Before we state the results about the uniqueness of the van Dam and Hayden embezzling family, we prove the following Lemma, which is a direct consequence of Theorem 2 in Ref.~\cite{Wij85}, which we restate here to improve readability: Let $\mu$ be a measure on the real line $\mbR$, and let $f_i$, $g_i$ ($i = 1,2$) be four Borel-measurable functions: $\mbR \to \mbR$ such that $f_2 \geq 0$ and $g_2 \geq 0$, and $\int \A{f_i g_j} \diff\mu < \infty$ ($i,j = 1,2$). If $f_1/f_2$ and $g_1/g_2$ are monotonic in the same direction, then
\begin{equation}
	\int f_1 g_1 \diff\mu \int f_2 g_2 \diff \mu \geq \int f_1 g_2 \diff \mu \int f_2 g_1 \diff \mu .
	\end{equation}
\begin{lemma-app}\label{le:int-ineq}
	Let $f$, $g$ be continuous, positive functions on $[a, b]$ such that $f(x)/g(x)$ is a non-decreasing function on $[a, b]$. Then, for every $x_1$ such that $a < x_1 < b$, 
	\begin{equation}
		\int_{a}^b f(x) \diff x \int_a^{x_1} g(x)\diff x \geq \int_{a}^{x_1} f(x) \diff x \int_a^{b} g(x) \diff x.
	\end{equation}
\end{lemma-app}
\begin{proof}
	Let $\tilde{h}(x) = 1$ and $h(x) = \chi_{[a, x_1]}(x)$, where $\chi_S(x)$ is the characteristic function of the set $S$. Furthermore, let
	\begin{equation}
		h_k(x) = \begin{cases}
			1 & \text{if $a\leq x \leq x_1$}, \\
			e^{-k(x - x_1)} &\text{if $x_1 < x \leq b$}.
		\end{cases}
	\end{equation}
	The sequence of functions $\Set{h_k}_{k \in \mbN}$ converges to $h(x)$ and satisfies $0\le h_k(x) \leq 1$. Due to the dominated convergence theorem (see, e.g., Ref.~\cite{Rud87}), we have
	\begin{equation}
		\lim_{k \to \infty} \int_{a}^{b} f(x)h_k(x) = \int_{a}^{x_1} f(x).
	\end{equation}
	The same result holds for $g$.
	
	Since both $f(x)/g(x)$ and $\tilde{h}(x)/h_k(x)$ are non-decreasing, we can apply Theorem 2 of Ref.~\cite{Wij85} to find
	\begin{equation}
		\int_{a}^b f(x) \tilde{h}(x) \diff x\int_{a}^{b} g(x)h_k(x)\diff x \geq \int_a^b f(x) h_k(x) \diff x \int_a^b g(x) \tilde{h}(x)\diff x.
	\end{equation}
	After taking the limit $k \to \infty$ on both sides, we obtain
	\begin{equation}
		\int_{a}^b f(x) \diff x \int_a^{x_1} g(x)\diff x \geq \int_{a}^{x_1} f(x) \diff x \int_a^{b} g(x) \diff x,
	\end{equation}
	which finishes the proof.
\end{proof}

With this lemma at hand, we are ready to present the promised results concerning the uniqueness of the van Dam and Hayden embezzling family.

\begin{theorem-app}\label{th:uniq-vDH}
	Let $f$ be a positive non-increasing function such that   $f(x)/x^\alpha$ is  asymptotically monotonic for all $\alpha \in \mbR$ and let $\Set{\chi_n}_{n \in \mbN}$ be the regular family of states associated to it (see Definition~\ref{def:reg-fam}). Then $\Set{\chi_n}_{n \in \mbN}$ is a universal embezzling family if and only if $f \in \omega(x^{-1-\varepsilon})\cap o(x^{-1 +\varepsilon})\ \forall \varepsilon>0$ and $\sum_{x = 1}^\infty f(x) = \infty$. Furthermore, if $f \notin \omega(x^{-1-\varepsilon})\cap o(x^{-1 +\varepsilon})$ for at least one $\varepsilon>0$, then  $\Set{\chi_{n_j}}_{j \in \mbN}$, where $\Set{n_j}_{j \in \mbN}$ is any sequence of natural numbers, is not a universal embezzling family.
\end{theorem-app}
\begin{proof} Before we start, we notice that if $\lim_{j \to \infty} n_j = J <\infty$, then $\lim_{j \to \infty}F_{n_j} = \sum_{x = 1}^J f(x) < \infty$, which implies that the family of states $\Set{\chi_{n_j}}_{j \in \mbN}$ is not an embezzling family (see Corollary~\ref{co:Fn-div}). In this proof we will thus assume, w.l.o.g., that $\lim_{j \to \infty} n_j = \infty$.
	
	\emph{Necessary condition  ---}
	Let $\Set{\chi_n}_{n \in \mbN}$ be a universal embezzling family with corresponding function $f$ satisfying the assumptions above. We already proved in Corollary~\ref{co:Fn-div}  that if $\Set{\chi_n}_{n \in \mbN}$ is a universal embezzling family, then $\sum_{x = 1}^\infty f(x) = + \infty$. To show the remainder, let
	\begin{equation}
		R= \Set{\alpha \in \mbR | \lim_{x \to \infty} \frac{f(x)}{x^\alpha} = 0} \, , \quad L= \Set{\alpha \in \mbR | \lim_{x \to \infty} \frac{f(x)}{x^\alpha} = +\infty}.
	\end{equation}
	Since $\lim_{x \to \infty} f(x) /x^\alpha =0$ for $\alpha >0$, we have that $R \neq \emptyset$. Now we prove by contradiction that also $L \neq \emptyset$. Let us assume that $L$ is empty. This implies that $\lim_{x \to \infty}  f(x)/x^\alpha = l_\alpha < \infty$ for all $\alpha$. If there exists an $\alpha$ such that $l_\alpha \neq 0$, then $\lim_{x \to \infty} f(x)/x^{\alpha - 1} =  l_\alpha (\lim_{x \to \infty} x) = +\infty$. Thus, $\alpha - 1 \in L$, and $L \neq \emptyset$, leading to the desired contradiction. If instead $\lim_{x \to \infty} f(x)/x^\alpha = 0$ for all $\alpha$, then $\lim_{x \to \infty} f(x) / x^{-2} = 0$. This implies that $f(x)/x^{-2}$ converges monotonically to zero  for large $x$, i.e., there exists an $N$ such that $f(x) < x^{-2}$ for $x > N$. Thus, $\sum_{x=N}^\infty f(x)  < \sum_{x=N}^\infty x^{-2}  < \infty$. According to Corollary~\ref{co:Fn-div}, this contradicts the hypothesis that the family under consideration is universally embezzling. We have therefore shown that $L \neq \emptyset \neq R$. 
	
	The next step is to prove that $\inf R = \sup L$. From the definition of $R$ and $L$ follows that $\inf R \geq \sup L$. Let us assume that $\inf R > \sup L$, i.e., that there exists an $\alpha \in \mbR$ such that $\sup L < \alpha <\inf R$. Since $\alpha \notin L \cup R$, there exists a positive real number $l$ such that $\lim_{x \to \infty} f(x)/x^\alpha = l $. 
	Now pick $\alpha_1$ such that $\sup L <\alpha_1 < \alpha <\inf R$. We observe that $\lim_{x \to \infty}f(x)/x^{\alpha_1} = l (\lim_{x\to \infty} x^\alpha / x^{\alpha_1} )= + \infty$, thus $\alpha_1 \in L$. This is in contradiction to the choice $\alpha_1 > \sup L$. We therefore showed that $\sup L = \inf R$. 
	
	So far, we have shown that if $\Set{\chi_n}$ is a universal embezzling family satisfying our assumptions, then there exists a unique $\alpha$ such that $f(x) \in o(x^{\alpha+\varepsilon})\cap \omega(x^{\alpha-\varepsilon})$ for every $\varepsilon > 0$. What is left to show is that if $\alpha \neq -1$, then the family of states corresponding to $f$ cannot be universally embezzling. From the above discussion, we know that $\Set{\alpha > 0} \subseteq R$, thus we can focus on $\alpha \leq 0$ and split our discussion into two scenarios, $\alpha < -1$ and $-1 < \alpha \leq0$.
	
	If $\alpha < -1$, then there exists an $\varepsilon >0$ such that $\alpha + \varepsilon < -1$. Since $f \in o(x^{\alpha + \varepsilon})$ and $f(x)/ x^{\alpha + \varepsilon}$ converges monotonically to zero for large $x$, there exists an $N$ such that $f(x) < x^{\alpha + \varepsilon}$ for $x > N$. This implies that $\sum_{x=N}^\infty f(x)  < \sum_{x=N}^\infty x^{\alpha + \varepsilon}  < \infty$  and therefore $\sum_{x=1}^\infty f(x) <\infty$, which, according to Corollary~\ref{co:Fn-div}, contradicts the hypothesis that $\Set{\chi_n}$ is a universal embezzling family. Let $\Set{n_j}_{j \in \mbN}$ be any sequence of natural number such that $\lim_{j \to \infty} n_j = \infty$. Then $\lim_{j \to \infty}\sum_{x = 1}^{n_j} f(x) = \sum_{x = 1}^\infty f(x) < \infty$. This implies, again due to Corollary~\ref{co:Fn-div}, that $\Set{\chi_{n_j}}_{j \in \mbN}$ is not a universal embezzling family.
	
	If $-1<\alpha\leq 0$, then there exists an $\varepsilon > 0$ such that $\alpha - \varepsilon > -1$. We notice that $f(x)/ x^{\alpha - \varepsilon}$ diverges to infinity for large $x$, because $f \in \omega(x^{\alpha - \varepsilon})$. This implies that there exists an $N$ such that $f(x) \ge x^{\alpha -\varepsilon}$ for $x > N$ and therefore $\sum_{x=N}^\infty f(x)  \ge \sum_{x=N}^\infty x^{\alpha - \varepsilon}  = \infty$,        
	which allows us to use the results of Section~\ref{sec:reg-fam-states}. Since $f(x)/x^{\alpha - \varepsilon}$ is non-decreasing for $x > N$, for all $y$ such that $y/m >N$, we can use Lemma~\ref{le:int-ineq} to obtain 
	\begin{equation}
		\int_{N}^{y} f(x) \diff x \int_{N}^{y/m} x^{\alpha - \varepsilon} \diff x \geq \int_{N}^{y/m} f(x) \diff x \int_{N}^{y} x^{\alpha - \varepsilon}\diff x.
	\end{equation}
	This implies that 
	\begin{equation}
		\frac{(y/m)^{\alpha - \varepsilon + 1} - N^{\alpha - \varepsilon + 1}}{(y)^{\alpha - \varepsilon + 1} - N^{\alpha - \varepsilon + 1}}\ge \frac{\int_{N}^{y/m} f(x) \diff x}{\int_{N}^{y} f(x) \diff x}.
	\end{equation}
	Next we introduce $\delta := \alpha - \varepsilon + 1 > 0$ and take on both sides the limit inferior $y \to \infty$, leading to 
	\begin{equation}
		\begin{aligned}
			\frac{1}{m^\delta} &= \liminf_{y \to \infty} \frac{(y/m)^{\alpha - \varepsilon + 1} - N^{\alpha - \varepsilon + 1}}{(y)^{\alpha - \varepsilon + 1} - N^{\alpha - \varepsilon + 1}} \\
			&\geq \liminf_{y \to \infty} \frac{\int_{N}^{y/m} f(x) \diff x}{\int_{N}^{y} f(x) \diff x} \\
			&= \liminf_{y \to \infty} \frac{\int_{1}^{y/m} f(x) \diff x}{\int_{1}^{y} f(x) \diff x},
		\end{aligned}
	\end{equation}
	where in the last equality, we used that $\int_{1}^{N} f(x) \diff x$ is finite, whilst  $\int_{N}^{\infty} f(x) \diff x$ diverges.
	As last step, we  observe that according to Proposition~\ref{prop:sum-to-int},
	\begin{equation}
		\begin{aligned}
			\liminf_{n \to \infty}T_\star(\chi_n \to \chi_n \otimes \Phi_m) &\ge \liminf_{y \to \infty}\frac{\max_{1 \leq a \leq y/m}\Set{\int_{a}^{am} f(x)\diff x}}{\int_{1}^{y}f(x) \diff x}\\
			&\geq \liminf_{y \to \infty}\frac{\int_{y/m}^{y} f(x)\diff x}{\int_{1}^{y} f(x) \diff x} \\
			&\geq 1 - \frac{1}{m^\delta} \\
			&> 0.
		\end{aligned}
	\end{equation}
	
	Using the remark after Definition~\ref{def:reg-fam}, we obtain that for $-1 <\alpha \leq 0$, there are no sequences $\Set{n_j}_{j \in \mbN}$ such that $\Set{\chi_{n_j}}_{j \in \mbN}$ is a universal embezzling family, contradicting the hypothesis.
	Once we combine the results for $\alpha < -1$ and $-1< \alpha \leq 0$, we have that if $\Set{\chi_n}$ is a universal embezzling family satisfying our assumptions, $f(x) \in o(x^{-1 + \varepsilon}) \cap \omega(x^{-1 -\varepsilon})$ for all $\varepsilon >0$. We have also shown that if $f \notin o(x^{-1 + \varepsilon}) \cap \omega(x^{-1 -\varepsilon})$ for at least one $\varepsilon >0$, then there are no sequences $\Set{n_j}_{j \in \mbN}$ such that the family $\Set{\chi_{n_j}}_{j \in \mbN}$ is a universal embezzling family. This concludes the first part of the proof.
	
	\emph{Sufficient condition ---} Let $f$ be a function satysfying our assumptions. Since $f(x)/x^{-1}$ is asymptotically monotonic by assumption,  $\lim_{x \to \infty} f(x)/x^{-1}$ exists in $[0, +\infty]$. A priori, it can be either $0$, $0<l \in \mbR$, or $+ \infty$. 		
	If $\lim_{x \to \infty} f(x)/x^{-1} = l >0$, then by definition of the limit, for any $l>\tilde\varepsilon > 0$ there exists an $N$ such that $\A{\frac{f(x)}{x^{-1}} - l} <\tilde\varepsilon$ for all $x >N$. This is equivalent to $(l-\tilde\varepsilon) x^{-1} < f(x) < (l + \tilde\varepsilon) x^{-1}$ for all $x > N$.  Since $\sum_{x =N}^{\infty}f(x) \geq (l - \tilde\varepsilon)\sum_{x = N}^\infty x^{-1} = +\infty$, we can compute the star conversion distance using Proposition~\ref{prop:sum-to-int}, and find
	\begin{equation}\label{eq:bound-dist}
		\begin{aligned}
			\lim_{n \to \infty}&T_\star(\chi_n \to \chi_n \otimes \Phi_m) \\
			&= \lim_{y \to \infty}\frac{\max_{1 \leq a \leq y/m}\Set{\int_{a}^{am} f(x)\diff x}}{\int_{1}^{y}f(x) \diff x} \\
			& \leq \lim_{y \to \infty}\frac{\max_{1 \leq a < N}\Set{\int_{a}^{am} f(x)\diff x} + \max_{N\leq a \leq y/m}\Set{\int_{a}^{am} f(x)\diff x}}{\int_{1}^{y}f(x) \diff x}\\
			&= \lim_{y \to \infty}\frac{\max_{N\leq a \leq y/m}\Set{\int_{a}^{am} f(x)\diff x}}{\int_{1}^{y}f(x) \diff x} \\
			&\leq \lim_{y \to \infty}\frac{\max_{N\leq a \leq y/m}\Set{\int_{a}^{am} f(x)\diff x}}{\int_{N}^{y}f(x) \diff x} \\
			& \leq \frac{l + \tilde\varepsilon}{l - \tilde\varepsilon} \lim_{y \to \infty}\frac{\max_{N\leq a \leq y/m}\Set{\int_{a}^{am} x^{-1}\diff x}}{\int_{N}^{y}x^{-1} \diff x} \\
			&= \frac{l + \tilde\varepsilon}{l - \tilde\varepsilon}  \lim_{y \to \infty} \frac{\log m}{\log y - \log N}\\
			&= 0.
		\end{aligned}
	\end{equation}
	Thus if $\lim_{x \to \infty} f(x)/x^{-1} = l \neq 0$, then $\Set{\chi_n}$ is a universal embezzling family.
	
	Suppose now $\lim_{x \to \infty} f(x)/x^{-1} = 0$. Since $f(x)/x^{-1}$ is non-increasing for large $x$, there exists an $\tilde{N}$ such that $f(x) < x^{-1}$ for all $x > \tilde{N}$. For any fixed $\varepsilon >0$, by hypothesis, $\lim_{x \to \infty} f(x)/ x^{-1 -\varepsilon} = + \infty$. Furthermore, since by assumption $\int_1^\infty f(x) \diff x= \infty$, while $\int_{1}^{\infty} x^{-1 -\varepsilon} \diff x < \infty$, there exists an $\tilde{N}_\varepsilon$, such that $\int_1^y f(x) \diff x \geq \int_{1}^{y} x^{-1 -\varepsilon} \diff x$ for $y > \tilde{N}_\varepsilon$. Let $N_\varepsilon = \max\Set{\tilde{N}, \tilde{N}_\varepsilon}$. Using again Proposition~\ref{prop:sum-to-int} and performing the same steps as in Eq.~\eqref{eq:bound-dist}, we obtain 
	\begin{equation}
		\begin{aligned}
			\limsup_{n \to \infty}T_\star(\chi_n \to \chi_n \otimes \Phi_m) &\leq \limsup_{y \to \infty}\frac{\max_{N_\varepsilon\leq a \leq y/m}\Set{\int_{a}^{am} f(x)\diff x}}{\int_{1}^{y}f(x) \diff x} \\
			&\leq \limsup_{y \to \infty}\frac{\max_{N_\varepsilon\leq a \leq y/m}\Set{\int_{a}^{am} x^{-1}\diff x}}{\int_{1}^{y}x^{-1 - \varepsilon} \diff x} \\
			&= \log m \lim_{y \to \infty}\frac{\varepsilon}{1 - y^{-\varepsilon}} \\
			&= \varepsilon\log m.
		\end{aligned}
	\end{equation}
	Since this is true for all $\varepsilon > 0$,
	\begin{equation}
		\limsup_{n \to \infty}T_\star(\chi_n \to \chi_n \otimes \Phi_m) \leq \lim_{\varepsilon \to 0} \varepsilon \log m = 0.
	\end{equation}
	This implies that $\lim_{n \to \infty}T_\star(\chi_n \to \chi_n \otimes \Phi_m)=0$, i.e., $\{\chi_n\}$ is a universal embezzling family.
	
	Let us now consider the case $\lim_{x \to \infty} f(x)/x^{-1} = \infty$. From this immediately follows that $\sum_{x = 1}^\infty f(x) = +\infty$, so we can use the results of Section~\ref{sec:reg-fam-states}. Furthermore, there exists an $N$ such that $f(x)/x^{-1}$ is non-decreasing for $x \ge N$. We write $f(x) = x^{-1}h(x)$ and thus $h(x)$ is non-decreasing for $x \ge N$. Computing the derivative of $\int_a^{am} f(x) \diff x$ for $a \ge N$, we find
	\begin{equation}
		\deriv{\int_{a}^{am} f(x)\diff x}{a} = m f(am) - f(a) =\frac{h(am) - h(a)}{a} \geq 0.
	\end{equation}
	Thus, $\max_{N\le a \le y/m} \int_{a}^{am} f(x)\diff x = \int_{y/m}^{y} f(x) \diff x$ and, by following the same steps as in Eq.~\eqref{eq:bound-dist}, we obtain
	\begin{equation}\label{eq:star-dist-up-bound}
		\begin{aligned}
			\limsup_{n \to \infty}T_\star(\chi_n \to \chi_n \otimes \Phi_m) &\leq \limsup_{y \to \infty}\frac{\max_{N\le a \le y/m}\Set{\int_{a}^{am} f(x)\diff x}}{\int_{1}^{y}f(x) \diff x} \\
			&= \limsup_{y \to \infty}\frac{\int_{y/m}^{y} f(x)\diff x}{\int_{1}^{y}f(x) \diff x}.
		\end{aligned}
	\end{equation}
	Let us fix $\varepsilon >0$. Since $f(x)/x^{-1 + \varepsilon}$ is non-increasing for $x > N_\varepsilon$ by hypothesis, $x^{-1 + \varepsilon}/f(x)$ is non-decreasing for $x > N_\varepsilon$. By applying Lemma~\ref{le:int-ineq} (for $y$ large enough), we obtain
	\begin{equation}
		\int_{N_\varepsilon}^y x^{-1 +\varepsilon} \diff x \int_{N_\varepsilon}^{y/m} f(x) \diff x \geq \int_{N_\varepsilon}^{y/m}x^{-1 +\varepsilon} \diff x \int_{N_\varepsilon}^y f(x) \diff x.
	\end{equation}
	From this follows that
	\begin{equation}
		\frac{\int_{N_\varepsilon}^{y/m} f(x) \diff x }{\int_{N_\varepsilon}^{y} f(x) \diff x } \geq \frac{(y/m)^\varepsilon - N_\varepsilon^\varepsilon}{y^\varepsilon - N_\varepsilon^\varepsilon}.
	\end{equation}
	Taking the $\limsup$ on both sides, and adding the finite contributions $\int_1^{N_\varepsilon} f(x) \diff x$ to the diverging integrals $\int_{N_\varepsilon}^{y/m} f(x) \diff x$ and $\int_{N_\varepsilon}^y f(x) \diff x$, we obtain
	\begin{equation}\label{eq:int-low-bound}
		\limsup_{y \to \infty} \frac{\int_{1}^{y/m} f(x) \diff x }{\int_{1}^{y} f(x) \diff x } \geq \frac{1}{m^\varepsilon}.
	\end{equation}
	Combining Eq.~\eqref{eq:star-dist-up-bound} and Eq.~\eqref{eq:int-low-bound} we get
	\begin{equation}
		\limsup_{n \to \infty}T_\star(\chi_n \to \chi_n \otimes \Phi_m) \leq 1 - \limsup_{y \to \infty} \frac{\int_{1}^{y/m} f(x) \diff x }{\int_{1}^{y} f(x) \diff x } \leq 1 - \frac{1}{m^\varepsilon}.
	\end{equation}
	This is true for all $\varepsilon > 0$, thus taking the limit $\varepsilon \to 0$ we obtain the desired result
	\begin{equation}
		\limsup_{n \to \infty}T_\star(\chi_n \to \chi_n \otimes \Phi_m) \leq \lim_{\varepsilon \to 0} 1 - \frac{1}{m^\varepsilon} = 0.
	\end{equation}
	This concludes the proof of the sufficient condition.
\end{proof}

The above result on the uniqueness of the van Dam and Hayden family can be expressed as follows: Any regular family of states $\Set{\chi_n}_{n \in \mbN}$ satisfying the conditions of Theorem~\ref{th:uniq-vDH} is a universal embezzling family if and only if $f$, the function associated to $\Set{\chi_n}_{n \in \mbN}$, is asymptotically close to $x^{-1}$, the function associated to the van Dam and Hayden family, where asymptotically close means $f \in \omega(x^{-1 -\varepsilon}) \cap o(x^{-1 + \varepsilon})$ for all $\varepsilon > 0$.

The assumption that $f(x)/x^\alpha$ is asymptotically monotonic for every $\alpha \in \mbR$ is crucial for our proof and does not follow from the monotonicity of $f$. There are functions
that are non-increasing, but oscillate asymptotically when multiplied by powers of $x$. An example is the function
\begin{equation}\label{eq:ex-non-incr-f}
	f(x)=\frac{1+(1+\sin\ln\ln x)\ln x}{x},
\end{equation}
which is non-increasing, but  $x f(x)$ oscillates between $1$ and $\infty$. Theorem~\ref{th:uniq-vDH} does not provide any information about  families of states associated to such functions, and it cannot be used to determine whether such families are universally embezzling or not.

Next, we prove two related propositions concerning regular families of states associated to non-decreasing functions.

\begin{proposition-app}\label{prop:f-incr-vdh-alpha}
	Let $f$ be a positive non-decreasing function such that $f(x)/x^\alpha$ is asymptotically non-increasing for at least one $\alpha >0$. Then the regular family of states $\Set{\chi_n}_{n \in \mbN}$ associated to $f$ (see Definition~\ref{def:reg-fam}) is not a universal embezzling family. Furthermore, there are no sequences $\Set{n_j}_{j \in \mbN}$ such that $\Set{\chi_{n_j}}_{j \in \mbN}$ is a universal embezzling family.
\end{proposition-app}
\begin{proof}
	Since $f(x)$ is non-decreasing and $\sum_{x=1}^{\infty} f(x)$ diverges, we can use Proposition~\ref{prop:sum-to-int} and Lemma~\ref{le:int-ineq}, to obtain
	\begin{equation}
		\begin{aligned}
			\liminf_{n \to \infty}T_\star(\chi_n \to \chi_n \otimes \Phi_m) &\ge \liminf_{y \to \infty}\frac{\max_{1 \leq a \leq y/m}\Set{\int_{a}^{am} g(x, y)\diff x}}{\int_{1}^{y} g(x, y) \diff x}\\
			&= \liminf_{y \to \infty}\frac{\max_{1 \leq a \leq y/m}\Set{\int^{y - a+1}_{y - am +1} f(x)\diff x}}{\int_{1}^{y}f(x) \diff x}\\
			&\geq \liminf_{y \to \infty}\frac{\Set{\int^{y(1 - 1/m)+1}_{1} f(x)\diff x}}{\int_{1}^{y}f(x) \diff x}\\
			&\geq \liminf_{y \to \infty}\frac{\Set{\int^{y(1 - 1/m)+1}_{1} x^\alpha\diff x}}{\int_{1}^{y}x^\alpha \diff x}\\
			&=\left(1 - \frac{1}{m}\right)^{\alpha + 1}.
		\end{aligned}
	\end{equation}
	The last inequality is based on Lemma~\ref{le:int-ineq} and is derived as in the previous cases. Using the remark after Definition~\ref{def:reg-fam}, we obtain that $\Set{\chi_n}_{n \in \mbN}$ is not a universal embezzling family. Furthermore, there are no sequences $\Set{n_j}_{j \in \mbN}$ such that $\Set{\chi_{n_j}}_{j \in \mbN}$ is a universal embezzling family.
\end{proof}
\begin{proposition-app}\label{prop:f-incr-vdh-k}
	Let $f$ be a positive non-decreasing function such that $f(x)/e^{kx}$ is asymptotically non-decreasing for at least one $k >0$. Then the regular family of states $\Set{\chi_n}_{n \in \mbN}$ associated to $f$ is not a universal embezzling family. Furthermore, there are no sequences $\Set{n_j}_{j \in \mbN}$ such that $\Set{\chi_{n_j}}_{j \in \mbN}$ is a universal embezzling family.
\end{proposition-app}
\begin{proof}
	From Eq.~\eqref{eq:closed-star-conv-dist} follows that
	\begin{equation}
		\begin{aligned}
			\liminf_{n \to \infty}T_\star(\chi_n \to \chi_n \otimes \Phi_m) &\geq \left(1 - \frac{1}{m}\right) \liminf_{n \to \infty} p_{1}^{(n)}\\
			&= \left(1 - \frac{1}{m}\right)\liminf_{n \to \infty} \frac{f(n)}{\sum_{x=1}^{n}f(x)}\\
			&= \left(1 - \frac{1}{m}\right)\liminf_{y \to \infty} \frac{f(y)}{\int_{1}^{y}f(x) \diff x}\\
			&\geq \left(1 - \frac{1}{m}\right)\liminf_{y \to \infty} \frac{\int_{y -1}^y f(x) \diff x}{\int_{1}^{y}f(x) \diff x}\\
			&\geq \left(1 - \frac{1}{m}\right)\liminf_{y \to \infty} \frac{\int_{y -1}^y e^{kx} \diff x}{\int_{1}^{y}e^{kx} \diff x}\\
			&= \left(1 - \frac{1}{m}\right)(1 - e^{-k}).
		\end{aligned}
	\end{equation}
	Also here the last inequality is due to Lemma~\ref{le:int-ineq} and the family of states is not a universal embezzling family. Furthermore, there are no sequences $\Set{n_j}_{j \in \mbN}$ such that $\Set{\chi_{n_j}}_{j \in \mbN}$ is a universal embezzling family.
\end{proof}

\section{Asymptotically Regular Families}

In the definition of universal embezzling families, Definition~\ref{def:uef-app}, and in all the results about embezzlement, only the asymptotic behaviour of a family of states $\Set{\chi_n}_{n \in \mbN}$ is relevant. This motivates the following definition, which is a generalization of regular families.
\begin{definition-app}\label{def:a-reg-fam}
	A family of states $\Set{\chi_n}_{n \in \mbN}$ is called asymptotically regular  if there exists an asymptotically monotonic function $f\colon \mbN \to (0, \infty)$ such that 
	\begin{equation}
		\ket{\chi_n} = \frac{1}{\sqrt{F_n}}\sum_{x = 1}^{n}\sqrt{f(x)} \Ket{xx} \quad \text{for all} \quad n \in \mbN,
	\end{equation}
	where $F_n = \sum_{x=1}^{n} f(x)$.
\end{definition-app}
We next show that our results hold for asymptotically regular families too. To this end, we start with the following theorem.

\begin{theorem-app}\label{th:a-reg-fam}
	Let $\Set{\chi_n}_{n \in \mbN}$ be an asymptotically regular family and let $f$ be the function associated to it (see Definition~\ref{def:a-reg-fam}). Then one can construct a function $\tilde{f}$ that satisfies 
	\begin{enumerate}
		\item $\tilde f$ corresponds to a regular family $\Set{\tilde\chi_n}_{n \in \mbN}$ (see Definition~\ref{def:reg-fam}),
		\item $\lim_{x \to \infty} \tilde{f}(x)/f(x) =1$,
		\item $\Set{\tilde\chi_n}_{n \in \mbN}$ is a universal embezzling family if and only if $\Set{\chi_n}_{n \in \mbN}$ is a universal embezzling family,
		\item $\Set{\tilde\chi_n}_{n \in \mbN}$ contains a universal embezzling subfamily if and only if $\Set{\chi_n}_{n \in \mbN}$ contains a universal embezzling subfamily.
	\end{enumerate}
\end{theorem-app}
\begin{proof}
	Since $f$ is by assumption asymptotically monotonic,  $\lim_{x \to\infty} f(x)$ exists in the extended domain $[0, \infty]$. We first study the case $\lim_{x \to\infty} f(x) = l$, with $0 < l \in \mbR$. In this case, we choose $\tilde{f}(x) = l$, which corresponds to a regular family. Moreover, from the definition of $\tilde{f}$ follows that $\lim_{x \to \infty}f(x)/ \tilde{f}(x) =1$. We also notice that for every $0 < \varepsilon< l$,  there exists an $N$ such that $l -\varepsilon < f(x) < l + \varepsilon$ for all $x \geq N$. This implies that $\sum_{x = 1}^\infty f(x) \geq \sum_{x = N}^\infty f(x) \geq \sum_{x = N}^\infty (l - \varepsilon) = \infty$. Similarly, $\sum_{x = 1}^\infty \tilde{f}(x) = \sum_{x = 1}^\infty l = \infty$. Since $\lim_{n \to \infty} F_n^{-1} = \lim_{n \to \infty} \tilde{F}_n^{-1} = 0$, we obtain that $\lim_{n\to\infty}p_1^{(n)}=0$ (where $\p^{(n)}$ are the Schmidt coefficients of $\chi_n$), and therefore, according to Eq.~\eqref{eq:closed-star-conv-dist},
	\begin{equation}\label{eq:liming-f-l}
		\begin{aligned}
			\liminf_{n \to \infty} T_\star(\chi_n \to \chi_n \otimes \Phi_m) &= \liminf_{n \to \infty} \max_{k \in [n]}\left\{\N{\p^{(n)}}_{(k)} - \N{\p^{(n)}}_{(a_k)} - \frac{b_k}{m} p^{(n)}_{ a_k+1}\right\} \\
			&= \liminf_{n \to \infty} \max_{k \in \Set{N, \dots, n}}\left\{\sum_{x = a_k +1}^k p^{(n)}_x - \frac{b_k}{m} p^{(n)}_{ a_k+1}\right\} \\
			&= \liminf_{n \to \infty} \max_{k \in \Set{N, \dots, n}}\left\{\sum_{x = a_k +1}^k p^{(n)}_x \right\},
		\end{aligned}
	\end{equation}
	where $a_k = \floor{k/m}$ and $b_k = k - m a_k$. The Schmidt coefficients are by definition non-increasing, i.e., obtained by reordering $\Set{f(x)/F_n}_{x \in \mbN}$. Since there are at most $N$ natural numbers $x$ that do not satisfy the condition $l - \varepsilon<f(x)< l + \varepsilon$, there are at most $N$ Schmidt coefficients that do not satisfy $\frac{l - \varepsilon}{F_n} < p^{(n)}_x < \frac{l + \varepsilon}{F_n}$. We call the set of indices corresponding to these Schmidt coefficients $A$ (thus $\A{A} \leq N$) and observe that for any $a, b \in \Set{N, \dots, n}$ such that $a \leq b$,
	\begin{equation}
		\begin{aligned}
			\sum_{x = a}^b p^{(n)}_x &= \sum_{x \in \Set{a, \dots, b} \setminus A} p^{(n)}_x + \sum_{x \in \Set{a, \dots, b} \cap A} p^{(n)}_x\\
			&= \sum_{x \in \Set{a, \dots, b} \setminus A} p^{(n)}_x + \sum_{x \in \Set{a, \dots, b} \cap A} \frac{l}{F_n} + \sum_{x \in \Set{a, \dots, b} \cap A} \left(p^{(n)}_x -\frac{l}{F_n}\right) .
		\end{aligned}
	\end{equation}
	This implies that
	\begin{equation}
		\begin{aligned}
			\sum_{x = a}^b \frac{(l - \varepsilon)}{F_n} + \sum_{x \in \Set{a, \dots, b} \cap A} \left(p^{(n)}_x -\frac{l}{F_n}\right) &< \sum_{x = a}^b p^{(n)}_x \\
			&< \sum_{x = a}^b \frac{(l + \varepsilon)}{F_n} + \sum_{x \in \Set{a, \dots, b} \cap A} \left(p^{(n)}_x -\frac{l}{F_n}\right),
		\end{aligned}
	\end{equation}
	and therefore
	\begin{equation}
		\begin{aligned}
			\liminf_{n \to \infty} \max_{k \in \Set{N, \dots, n}}\left\{\sum_{x = a_k +1}^k \frac{l-\varepsilon}{F_n} \right\} &\leq \liminf_{n \to \infty} \max_{k \in \Set{N, \dots, n}}\left\{\sum_{x = a_k +1}^k p^{(n)}_x \right\} \\
			&\leq \liminf_{n \to \infty} \max_{k \in \Set{N, \dots, n}}\left\{\sum_{x = a_k +1}^k \frac{l+\varepsilon}{F_n}\right\} ,
		\end{aligned}
	\end{equation}
	which implies
	\begin{equation}
		\begin{aligned}
			(l - \varepsilon)\liminf_{n \to \infty} \frac{n - \floor{n/m}}{F_n}  &\leq \liminf_{n \to \infty} \max_{k \in \Set{N, \dots, n}}\left\{\sum_{x = a_k +1}^k p^{(n)}_x \right\} \\
			&\leq   (l + \varepsilon)\liminf_{n \to \infty} \frac{n - \floor{n/m}}{F_n}.
		\end{aligned}
	\end{equation}
	Observing that $\sum_{x = 1}^{N} f(x) + (l - \varepsilon)(n - N) < F_n < \sum_{x = 1}^{N} f(x) + (l + \varepsilon)(n - N)$, we obtain
	\begin{equation}\label{eq:low-up-bounds-f-l}
		\frac{l - \varepsilon}{l + \varepsilon}\left(1 - \frac{1}{m}\right) \leq \liminf_{n \to \infty} \max_{k \in \Set{N, \dots, n}}\left\{\sum_{x = a_k +1}^k p^{(n)}_x \right\} \leq   \frac{l + \varepsilon}{l - \varepsilon}\left(1 - \frac{1}{m}\right) .
	\end{equation}
	Since Eq.~\eqref{eq:low-up-bounds-f-l} holds for every $\varepsilon > 0$, we conclude that
	\begin{equation}
		\liminf_{n \to \infty} \max_{k \in \Set{N, \dots, n}}\left\{\sum_{x = a_k +1}^k p^{(n)}_x \right\} = 1 - \frac{1}{m}.
	\end{equation}
	Inserting this result into Eq.~\eqref{eq:liming-f-l}, we have
	\begin{equation}
		\liminf_{x \to \infty}T_\star(\chi_n \to \chi_n \otimes \Phi_m) = 1 - \frac{1}{m}.
	\end{equation}
	The function $\tilde{f}$ is a rescaling of $f(x) = x^0$, which we already studied in Section~\ref{sec:gen-van-dam}. This implies that the family $\Set{\tilde{\chi}_n}_{n \in \mbN}$ is equal to the family $\Set{\chi^{(0)}_n}_{n \in \mbN}$ and
	\begin{equation}
		\begin{aligned}
			\liminf_{x \to \infty}T_\star(\tilde\chi_n \to \tilde\chi_n \otimes \Phi_m) &= \liminf_{x \to \infty}T_\star(\chi_n^{(0)} \to \chi_n^{(0)} \otimes \Phi_m) \\
			&= 1- \frac{1}{m} \\
			&= \liminf_{x \to \infty}T_\star(\chi_n \to \chi_n \otimes \Phi_m) .
		\end{aligned}
	\end{equation}
	This proves that neither $f$ nor $\tilde f$ corresponds to families with universally embezzling subfamilies (and are therefore also not universally embezzling themselves).
	
	We consider now the case $\lim_{x \to \infty} f(x) = 0$, thus $f$ is asymptotically non-increasing. Let $N$ be such that $f$ is non-increasing on $(N, \infty)$ and let $a = \min_{x \in \Set{1, \dots, N}} f(x)$. Since $f$ is positive, $a > 0$. Let $M> N$ be such that $f(x) < a$ for every $x \geq M$ (such $M$ exists because $\lim_{x \to \infty} f(x) = 0$). In this case, we define  $\tilde{f}$ as
	\begin{equation}
		\tilde{f}(x) = \begin{cases}
			f(M) & \text{if } x \leq M,\\
			f(x) & \text{if } x > M.
		\end{cases}
	\end{equation}
	Clearly, the family $\Set{\tilde\chi_n}_{n \in \mbN}$ associated to it is regular and $\lim_{x \to \infty} f(x)/\tilde{f}(x) = 1$. Furthermore, the ordered Schmidt coefficients satisfy for all $x \geq M$
	\begin{equation}\label{eq:sc-f-tf-0}
		p^{(n)}_{x} = \frac{f(x)}{F_n} = \frac{\tilde{f}(x)}{F_n} = \tilde{p}^{(n)}_{x} \frac{\tilde{F}_n}{F_n} .
	\end{equation}
	If $F_n$ converges, $\tilde{F}_n$ converges too and by Corollary~\ref{co:emb-nec-cond} neither $\Set{\chi_n}_{n \in \mbN}$ nor $\Set{\tilde\chi_n}_{n \in \mbN}$ are universal embezzling families (and do not contain universal embezzling subfamilies).  If instead $F_n$ diverges, so does $\tilde{F}_n$ and
	\begin{equation}
		\lim_{n \to \infty} \frac{\tilde{F}_n}{F_n} = \lim_{n \to \infty} \frac{\sum_{x = 1}^n \tilde{f}(x)}{\sum_{x = 1}^n f(x)} = \lim_{n \to \infty} \frac{\sum_{x = M}^n \tilde{f}(x)}{\sum_{x = M}^n f(x)} = 1.
	\end{equation}
	Using Eq.~\eqref{eq:closed-star-conv-dist} and Eq.~\eqref{eq:sc-f-tf-0}, we obtain
	\begin{equation}
		\begin{aligned}
			\liminf_{n \to \infty}\, &T_\star(\chi_n \to \chi_n \otimes \Phi_m)\\
			 &= \liminf_{n \to \infty} \max_{k \in [n]}\left\{\N{\p^{(n)}}_{(k)} - \N{\p^{(n)}}_{(a_k)} - \frac{b_k}{m} p^{(n)}_{ a_k+1}\right\}\\
			&= \liminf_{n \to \infty} \max_{k \in \Set{mM, \dots, n}}\left\{\N{\p^{(n)}}_{(k)} - \N{\p^{(n)}}_{(a_k)} - \frac{b_k}{m} p^{(n)}_{ a_k+1}\right\}\\
			&= \liminf_{n \to \infty}\frac{\tilde{F}_n}{F_n} \max_{k \in \Set{mM, \dots, n}}\left\{\N{\tilde{\p}^{(n)}}_{(k)} - \N{\tilde{\p}^{(n)}}_{(a_k)} - \frac{b_k}{m} \tilde{p}^{(n)}_{ a_k+1}\right\}\\
			&= \liminf_{n \to \infty} \max_{k \in [n]}\left\{\N{\tilde{\p}^{(n)}}_{(k)} - \N{\tilde{\p}^{(n)}}_{(a_k)} - \frac{b_k}{m} \tilde{p}^{(n)}_{ a_k+1}\right\}\\
			&=\liminf_{n \to \infty} T_\star(\tilde\chi_n \to \tilde\chi_n \otimes \Phi_m) ,
		\end{aligned}
	\end{equation}
	where again $a_k = \floor{k/m}$ and $b_k = k - m a_k$. Also in this case, we therefore proved that $\Set{\chi_n}_{n \in \mbN}$ is a universal embezzling family if and only if $\Set{\tilde{\chi}_n}_{n \in \mbN}$ is a universal embezzling family (and the same holds for subfamilies).
	
	Lastly, we consider the case when $\lim_{x \to \infty} f(x) = \infty$, and $f$ is asymptotically non-decreasing. Analogously to the previous case, let $N$ be such that $f$ is non-decreasing for $x \in (N, \infty)$. Let $a$ be the maximum of $f(x)$ for $x \in \Set{1, \dots, N}$, and let $M > N$ be such that $f(x) > a$ for $x \geq M$. Also here, we define $\tilde{f}$ via
	\begin{equation}
		\tilde{f} = \begin{cases}
			f(M) & \text{if } x \leq M,\\
			f(x) & \text{if } x > M.
		\end{cases}
	\end{equation}
	The family of states $\Set{\tilde\chi_n}_{n \in \mbN}$ is regular, $\lim_{x \to \infty} \tilde{f}(x)/f(x) =1$, and $\lim_{x \to \infty}\tilde{F}_n/ F_n = 1$. The Schmidt coefficients associated to $\chi_n$ and $\tilde\chi_n$ are related by
	\begin{equation}\label{eq:sc-f-tf-inf}
		p^{(n)}_x = \frac{f(n + 1 -x)}{F_n} = \frac{\tilde{f}(n +1 -x)}{F_n} = \tilde{p}^{(n)}_{x} \frac{\tilde{F}_n}{F_n}  \quad \forall x \leq n - M .
	\end{equation}
	Using Eq.~\eqref{eq:closed-star-conv-dist} again and the relation between Schmidt coefficients derived in Eq.~\eqref{eq:sc-f-tf-inf} we obtain
	\begin{equation}
		\begin{aligned}
			\liminf_{n \to \infty}\, &T_\star(\chi_n \to \chi_n \otimes \Phi_m) \\
			&= \liminf_{n \to \infty} \max_{k \in [n]}\left\{\N{\p^{(n)}}_{(k)} - \N{\p^{(n)}}_{(a_k)} - \frac{b_k}{m} p^{(n)}_{ a_k+1}\right\}\\
			&= \liminf_{n \to \infty} \max_{k \in [n - M]}\left\{\N{\p^{(n)}}_{(k)} - \N{\p^{(n)}}_{(a_k)} - \frac{b_k}{m} p^{(n)}_{ a_k+1}\right\}\\
			&= \liminf_{n \to \infty}\frac{\tilde{F}_n}{F_n} \max_{k \in [n - M]}\left\{\N{\tilde{\p}^{(n)}}_{(k)} - \N{\tilde{\p}^{(n)}}_{(a_k)} - \frac{b_k}{m} \tilde{p}^{(n)}_{ a_k+1}\right\}\\
			&= \liminf_{n \to \infty} \max_{k \in [n]}\left\{\N{\tilde{\p}^{(n)}}_{(k)} - \N{\tilde{\p}^{(n)}}_{(a_k)} - \frac{b_k}{m} \tilde{p}^{(n)}_{ a_k+1}\right\}\\
			&=\liminf_{n \to \infty} T_\star(\tilde\chi_n \to \tilde\chi_n \otimes \Phi_m).
		\end{aligned}
	\end{equation}
	This proves the theorem.
\end{proof}
Thanks to Theorem~\ref{th:a-reg-fam}, Theorem~\ref{th:uniq-vDH} also holds for asymptotically regular families.
\begin{corollary-app}\label{co:a-reg-fam}
	Let $f$ be a positive asymptotically non-increasing function such that  $f(x)/x^\alpha$ is asymptotically monotonic for all $\alpha \in \mbR$ and let $\Set{\chi_n}_{n \in \mbN}$ be the asymptotically regular family of states associated to $f$ (see Definition~\ref{def:a-reg-fam}). Then $\Set{\chi_n}_{n \in \mbN}$ is a universal embezzling family if and only if $f \in \omega(x^{-1-\varepsilon})\cap o(x^{-1 +\varepsilon})$ for all $\varepsilon>0$ and $\sum_{x = 1}^\infty f(x) = \infty$. Furthermore, if $f \notin \omega(x^{-1-\varepsilon})\cap o(x^{-1 +\varepsilon})$ for at least one $\varepsilon>0$, then $\Set{\chi_{n_j}}_{j \in \mbN}$, where $\Set{n_j}_{j \in \mbN}$ is any sequence of natural numbers, is not a universal embezzling family.
\end{corollary-app}
\begin{proof}
	Combine Theorem~\ref{th:a-reg-fam} and Theorem~\ref{th:uniq-vDH}.
\end{proof}
For the same reasons, Proposition~\ref{prop:f-incr-vdh-alpha} and Proposition~\ref{prop:f-incr-vdh-k} also hold for asymptotically regular families.
\begin{corollary-app}
	Let $f$ be a positive asymptotically non-decreasing function such that $f(x)/x^\alpha$ is asymptotically non-increasing for at least one $\alpha >0$. Then the asymptotically regular family of states $\Set{\chi_n}_{n \in \mbN}$ associated to $f$ is not a universal embezzling family. Furthermore, there are no sequences $\Set{n_j}_{j \in \mbN}$ such that $\Set{\chi_{n_j}}_{j \in \mbN}$ is a universal embezzling family.
\end{corollary-app}
\begin{corollary-app}
	Let $f$ be a positive asymptotically non-decreasing function such that $f(x)/e^{kx}$ is asymptotically non-decreasing for at least one $k >0$. Then the asymptotically regular family of states $\Set{\chi_n}_{n \in \mbN}$ associated with $f$ is not a universal embezzling family. Furthermore, there are no sequences $\Set{n_j}_{j \in \mbN}$ such that $\Set{\chi_{n_j}}_{j \in \mbN}$ is a universal embezzling family.
\end{corollary-app}

\end{document}